\documentclass{entcs}
\usepackage{amsmath,stmaryrd,amssymb,color,haskell,multicol,entcsmacro,tikz}
\usetikzlibrary{matrix,positioning,arrows}

\newcommand{\cat}[1]{\ensuremath{\mathbf{#1}}}
\newcommand{\op}{\ensuremath{{}^{\mathrm{op}}}}
\DeclareMathOperator{\arr}{arr}
\DeclareMathOperator{\first}{first}
\DeclareMathOperator{\second}{second}
\DeclareMathOperator{\inv}{inv}
\DeclareMathOperator{\fix}{fix}
\DeclareMathOperator{\dom}{dom}
\DeclareMathOperator{\cod}{cod}
\DeclareMathOperator{\Inv}{Inv}

\newcommand{\id}[1][]{\ensuremath{\mathrm{id}_{#1}}}
\newcommand{\prof}{\ensuremath{[\cat{C}\op\times\cat{C},\cat{Set}]}}

\newcommand{\acmp}{\ensuremath{\mathrel{>\!\!>\!\!>}}}
\newcommand{\afanout}{\ensuremath{\mathrel{\&\!\!\&\!\!\&}}}
\newcommand{\scolon}{\ensuremath{\mathrel{;}}}
\newcommand{\PInj}{\cat{PInj}}

\newcommand{\Pfn}{\cat{Pfn}}
\newcommand{\Kl}{\ensuremath{\mathcal{K}\!\ell}}
\newcommand{\ie}{\textit{i.e.}}
\newcommand{\eg}{\textit{e.g.}}

\newcommand{\changed}[1]{{\color{blue}#1}}
\newcommand{\cut}[1]{}

\begin{document}

\def\lastname{C. Heunen, R. Kaarsgaard, M. Karvonen}
\begin{frontmatter}
\title{Reversible effects as inverse arrows}
\author{Chris Heunen\thanksref{chrisemail}}
\address{School of Informatics\\ University of Edinburgh\\ United Kingdom}
\author{Robin Kaarsgaard\thanksref{robinemail}}
\address{Datalogisk Institut\\ University of Copenhagen\\ Denmark}
\author{Martti Karvonen\thanksref{marttiemail}}
\address{School of Informatics\\University of Edinburgh\\ United Kingdom}
\thanks[chrisemail]{Email:
    \href{mailto:chris.heunen@ed.ac.uk} {\texttt{\normalshape
        chris.heunen@ed.ac.uk}}}
\thanks[robinemail]{Email:
    \href{mailto:robin@di.ku.dk} {\texttt{\normalshape
        robin@di.ku.dk}}}
\thanks[marttiemail]{Email:
    \href{mailto:martti.karvonen@ed.ac.uk} {\texttt{\normalshape
        martti.karvonen@ed.ac.uk}}}
\begin{abstract}
  Reversible computing models settings in which all processes can be reversed.
  Applications include low-power computing, quantum computing, and robotics.
  It is unclear how to represent side-effects in this setting, because conventional methods need not respect reversibility.
  We model reversible effects by adapting Hughes' arrows to dagger arrows and inverse arrows.
  This captures several fundamental reversible effects, including serialization 
  and mutable store computations.
  Whereas arrows are monoids in the category of profunctors, dagger arrows are involutive monoids in the category of profunctors, and inverse arrows satisfy certain additional properties.
  These semantics inform the design of functional reversible programs supporting side-effects.
\end{abstract}
\begin{keyword}Reversible Effect; Arrow; Inverse Category; Involutive Monoid\end{keyword}
\end{frontmatter}

\section{Introduction}\label{sec:introduction}

Reversible computing studies settings in which all processes can be reversed: programs can be run backwards as well as forwards.
Its history goes back at least as far as 1961, when Landauer formulated his physical principle that logically irreversible manipulation of information costs work.
This sparked the interest in developing reversible models of computation as a means to making them more energy efficient.
Reversible computing has since also found applications in high-performance computing~\cite{schordanetal:parallel}, process calculi~\cite{cristescuetal:picalculus}, probabilistic computing~\cite{stoddartlynas:probabilistic}, quantum computing~\cite{selinger:dagger}, and robotics~\cite{schultzbordignonstoy:reconfiguration}.

There are various theoretical models of reversible computations. The most well-known ones are perhaps Bennett's reversible Turing machines~\cite{bennett:reversibility} and Toffoli's reversible circuit model~\cite{toffoli:reversible}. There are also various other models of reversible automata~\cite{morita:twoway,kutribwendlandt:automata} and combinator calculi~\cite{abramsky:structrc,jamessabry:infeff}.

We are interested in models of reversibility suited to functional programming languages. Functional languages are interesting in a reversible setting for two reasons. First, they are easier to reason and prove properties about, which is a boon if we want to understand the logic behind reversible programming. Second, they are not stateful by definition, which eases reversing programs.
It is fair to say that existing reversible functional programming languages~\cite{jamessabry:theseus,yokoyamaetal:rfun} still lack various desirable constructs familiar from the irreversible setting. 

Irreversible functional programming languages like Haskell naturally take semantics in categories. The objects interpret types, and the morphisms interpret functions. 
Functional languages are by definition not stateful, and their categorical semantics only models pure functions. However, sometimes it is useful to have non-functional side-effects, such as exceptions, input/output, or indeed even state. Irreversible functional languages can handle this elegantly using monads~\cite{moggi:monads} or more generally arrows~\cite{hughes:programmingarrows}.

A word on terminology. We call a computation $a \colon X \to Y$ \emph{reversible} when it comes with a specified partner computation $a^\dag \colon Y \to X$ in the opposite direction. This implies nothing about possible side-effects.
Saying that a computation is \emph{partially invertible} is stronger, and requires $a \circ a^\dag \circ a = a$. Saying that it is \emph{invertible} is even stronger, and requires $a \circ a^\dag$ and $a^\dag \circ a$ to be identities.
We call this partner of a reversible computation its \emph{dagger}. 
In other words, reversible computing for us concerns dagger arrows on dagger categories, and is modeled using involutions~\cite{heunenkarvonen:daggermonads}. 
In an unfortunate clash of terminology, categories of partially invertible maps are called inverse categories~\cite{cockettlack:restrictioncategories}, and categories of invertible maps are called groupoids~\cite{gabbaykropholler:lazy}. 
Thus, inverse arrows on inverse categories concern partially invertible maps.

We develop \emph{dagger arrows} and \emph{inverse arrows}, which are useful in two ways:
\begin{itemize}
  \item We illustrate the reach of these notions by exhibiting many fundamental reversible computational side-effects that are captured (in Section~\ref{sec:arrows}), including: pure reversible functions, information effects, reversible state, serialization, vector transformations, 
  dagger Frobenius monads~\cite{heunenkarvonen:reversiblemonads,heunenkarvonen:daggermonads}, recursion~\cite{kaarsgaardaxelsengluck:joininversecategories}, and superoperators. Because there is not enough space for much detail, we treat each example informally from the perspective of programming languages, but formally from the perspective of category theory.   
  \item We prove that these notions behave well mathematically (in Section~\ref{sec:arrowscategorically}): whereas arrows are monoids in a category of profunctors~\cite{jacobs2009categorical}, dagger arrows and inverse arrows are involutive monoids.
\end{itemize}

This paper aims to inform design principles of sound reversible programming languages. 
The main contribution is to match desirable programming concepts to precise category theoretic constructions.
As such, it is written from a theoretical perspective. To make examples more concrete for readers with a more practical background, we adopt the syntax of a typed first-order reversible functional programming language with type classes. 
We begin with preliminaries on reversible base categories (in Section~\ref{sec:inversecategories}).

\section{Dagger categories and inverse categories}\label{sec:inversecategories}

This section introduces the categories we work with to model pure computations: dagger categories and inverse categories. Each has a clear notion of reversing morphisms. Regard morphisms in these base categories as pure, ineffectful maps.

\begin{definition}
  A \emph{dagger category} is a category equipped with a \emph{dagger}: a contravariant endofunctor $\cat{C} \to \cat{C}$ satisfying $f^{\dag\dag}=f$ for morphisms $f$ and $X^\dag=X$ for objects $X$. 
  A morphism $f$ in a dagger category is:
  \begin{itemize}
    \item \emph{positive} if $f=g^\dag \circ g$ for some morphism $g$;
    \item a \emph{partial isometry} if $f = f \circ f^\dag \circ f$;
    \item \emph{unitary} if $f \circ f^\dag = \id$ and $f^\dag \circ f = \id$.
  \end{itemize}
  A dagger functor is a functor between dagger categories that preserves the dagger, \ie\ a functor $F$ with $F(f^\dag)=F(f)^\dag$.
  A \emph{(symmetric) monoidal dagger category} is a monoidal category equipped with a dagger making the coherence isomorphisms 
    \begin{align*}
    &\alpha_{X,Y,Z}\colon X\otimes (Y\otimes Z)\to (X\otimes Y)\otimes Z \qquad \rho_X\colon X\otimes I\to X \\ 
    &\lambda_X\colon I\otimes X\to X \qquad \text{(and }\sigma_{X,Y}\colon X\otimes Y\to Y\otimes X\text{ in the symmetric case)}
    \end{align*}
    unitary and satisfying $(f \otimes g)^\dag = f^\dag \otimes g^\dag$ for morphisms $f$ and $g$.
  We will sometimes suppress coherence isomorphisms for readability.
\end{definition}

Any groupoid is a dagger category under $f^\dag=f^{-1}$. Another example of a dagger category is $\cat{Rel}$, whose objects are sets, and whose morphisms $X \to Y$ are relations $R \subseteq X \times Y$, with composition $S \circ R = \{ (x,z) \mid \exists y \in Y \colon (x,y) \in R, (y,z) \in S\}$. The dagger is $R^\dag = \{(y,x) \mid (x,y) \in R\}$. It is a monoidal dagger category under either Cartesian product or disjoint union.

\begin{definition}
   A \emph{(monoidal) inverse category} is a (monoidal) dagger category of partial isometries where positive maps commute: $f \circ f^\dag \circ f=f$ and $f^\dag \circ f \circ g^\dag \circ g = g^\dag\circ g \circ f^\dag\circ f$ for all maps $f \colon X \to Y$ and $g \colon X \to Z$. 
\end{definition}

Every groupoid is an inverse category. Another example of an inverse category is $\cat{PInj}$, whose objects are sets, and morphisms $X \to Y$ are partial injections: $R \subseteq X \times Y$ such that for each $x \in X$ there exists at most one $y \in Y$ with $(x,y) \in R$, and for each $y \in Y$ there exists at most one $x \in X$ with $(x,y) \in R$. It is a monoidal inverse category under either Cartesian product or disjoint union.

\begin{definition}
  A dagger category is said to have \emph{inverse products}~\cite{giles:thesis} if it is a symmetric monoidal dagger category with a natural transformation $\Delta_X \colon X \to X
  \otimes X$ making the following diagrams commute:
  \[
    \begin{aligned}\begin{tikzpicture}
        \matrix (m) [matrix of math nodes,row sep=2em,column sep=4em,minimum width=2em]
        {X  & X\otimes X \\
          & X\otimes X \\};
        \path[->]
        (m-1-1) edge node [below] {$\Delta_X$} (m-2-2)
            edge node [above] {$\Delta_X$} (m-1-2)
        (m-1-2) edge node [right] {$\sigma_{X,X}$} (m-2-2);
    \end{tikzpicture}\end{aligned}
    \begin{aligned}\begin{tikzpicture}
        \matrix (m) [matrix of math nodes,row sep=2em,column sep=3em,minimum width=2em]
        {X  & & X\otimes X  \\
        X\otimes X & X\otimes (X\otimes X) & (X\otimes X)\otimes X \\};
        \path[->]
        (m-1-1) edge node [right] {$\Delta_X$} (m-2-1)
            edge node [above] {$\Delta_X$} (m-1-3)
        (m-2-1) edge node [below] {$\id\otimes\Delta_X$} (m-2-2)
        (m-2-2) edge node [below] {$\alpha$} (m-2-3)
        (m-1-3) edge node [left] {$\Delta_X\otimes\id$} (m-2-3);
    \end{tikzpicture}\end{aligned}
  \]
  \[
    \begin{aligned}\begin{tikzpicture}
        \matrix (m) [matrix of math nodes,row sep=2em,column sep=4em,minimum width=2em]
        {X  & X\otimes X \\
          & X \\};
        \path[->]
        (m-1-1) edge node [below] {$\id$} (m-2-2)
            edge node [above] {$\Delta_X$} (m-1-2)
        (m-1-2) edge node [right] {$\Delta_X^\dag$} (m-2-2);
    \end{tikzpicture}\end{aligned}
    \begin{aligned}\begin{tikzpicture}
        \matrix (m) [matrix of math nodes,row sep=1.5em,column sep=4em,minimum width=2em]
        {X\otimes X  && X\otimes(X\otimes X) \\
         & X& \\
         (X\otimes X)\otimes X & & X\otimes X \\};
        \path[->]
        (m-1-1) edge node [right] {$\Delta\otimes\id$} (m-3-1)
            edge node [above] {$\id\otimes\Delta_X$} (m-1-3)
            edge node [right=3mm] {$\Delta_X^\dag$} (m-2-2)
        (m-2-2) edge node [below] {$\Delta_X$} (m-3-3)
        (m-3-1) edge node [below] {$(\id\otimes\Delta_X^\dag)\circ\alpha^\dag$} (m-3-3)
        (m-1-3) edge node [left,yshift=4mm] {$(\Delta_X^\dag\otimes \id)\circ \alpha$} (m-3-3);
    \end{tikzpicture}\end{aligned}
  \]
  These diagrams express cocommutativity, coassociativity, speciality and the Frobenius law. 
\end{definition} 

Another useful monoidal product, here on inverse categories, is a disjointness tensor, defined in the following way (see \cite{giles:thesis}):

\begin{definition}\label{def:disjtensor}
  An inverse category is said to have a \emph{disjointness tensor} if it is
  equipped with a symmetric monoidal tensor product $- \oplus -$ such that its
  unit $0$ is a zero object, and the canonical \emph{quasi-injections}
  \begin{equation*}
    \amalg_1 = X \xrightarrow{\rho^{-1}_X} X \oplus 0 \xrightarrow{X \oplus 
    0_{0,Y}} X \oplus Y \qquad
    \amalg_2 = Y \xrightarrow{\lambda^{-1}_Y} 0 \oplus Y \xrightarrow{0_{0,X}
    \oplus Y} X \oplus Y
  \end{equation*}
  are jointly epic.
\end{definition}

For example, $\cat{PInj}$ has inverse products $\Delta_X \colon X \to X \otimes
X$ with $x \mapsto (x,x)$, and a disjointness tensor where $X \oplus Y$ is
given by the tagged disjoint union of $X$ and $Y$ (the unit of which is
$\emptyset$).

Inverse categories can also be seen as certain instances of restriction categories. Informally, a restriction category models partially defined morphisms, by assigning to each $f\colon A\to B$ a morphism $\bar{f}\colon A\to A$ that is the identity on the domain of definition of $f$ and undefined otherwise. For more details, see~\cite{cockettlack:restrictioncategories}.

\begin{definition}\label{def:restrictioncategory}
  A \emph{restriction category} is a category equipped with an operation that assigns to each $f\colon A\to B$ a morphism $\bar{f}\colon A\to A$ such that:
  \begin{itemize} 
    \item $f\circ\bar{f}=f$ for every $f$;
    \item $\bar{f}\circ\bar{g}=\bar{g}\circ\bar{f}$ whenever $\dom f=\dom g$;
    \item $\overline{g\circ\bar{f}}=\bar{g}\circ\bar{f}$ whenever $\dom f=\dom g$;
    \item $\bar{g}\circ f=f\circ\overline{g\circ f}$ whenever $\dom g=\cod f$.
  \end{itemize}
  A \emph{restriction functor} is a functor $F$ between restriction categories with $F(\bar{f})=\overline{F(f)}$. A \emph{monoidal restriction category} is a restriction category with a monoidal structure for which $\otimes\colon \cat{C}\times\cat{C}\to\cat{C}$ is a restriction functor.

  A morphism $f$ in a restriction category is a \emph{partial isomorphism} if there is a morphism $g$ such that $g \circ f=\bar{f}$ and $f \circ g=\bar{g}$. Given a restriction category \cat{C}, define $\Inv(\cat{C})$ to be the wide subcategory of \cat{C} having all partial isomorphisms of \cat{C} as its morphisms.
\end{definition}

An example of a monoidal restriction category is $\cat{PFn}$, whose objects are sets, and whose morphisms $X \to Y$ are partial functions: $R \subseteq X \times Y$ such that for each $x \in X$ there is at most one $y \in Y$ with $(x,y) \in R$. The restriction $\bar{R}$ is given by $\{(x,x) \mid \exists y \in Y \colon (x,y) \in R\}$.

\begin{remark}
  Inverse categories could equivalently be defined as either categories in which every morphism $f$ satisfies $f=f \circ g \circ f$ and $g=g \circ f \circ g$ for a unique morphism $g$, or as restriction categories in which all morphisms are partial isomorphisms~\cite[Theorem~2.20]{cockettlack:restrictioncategories}. It follows that functors between inverse categories automatically preserve daggers and that $\Inv(\cat{C})$ is an inverse category.

  It follows, in turn, that an inverse category with inverse products is a monoidal inverse category: because $X \otimes -$ and $- \otimes Y$ are endofunctors on an inverse category, they preserve daggers, so that by bifunctoriality $-\otimes -$ does as well:
 \[
   (f \otimes g)^\dag 
   = ((f \otimes \id[Y]) \circ (\id[X] \otimes g))^\dag 
   = (\id[X] \otimes g)^\dag \circ (f \otimes \id[Y])^\dag 
   = (\id[X] \otimes g^\dag) \circ (f^\dag \otimes \id[Y]) 
   = f^\dag \otimes g^\dag\text.
 \]
\end{remark}

\section{Arrows as an interface for reversible effects}\label{sec:arrows}

Arrows are a standard way to encapsulate computational side-effects in a functional (irreversible) programming language~\cite{hughes:arrows,hughes:programmingarrows}. This section extends the definition to reversible settings, namely to dagger arrows and inverse arrows. We argue that these notions are ``right'', by exhibiting a large list of fundamental reversible side-effects that they model.
We start by recalling irreversible arrows.

\begin{definition}\label{def:arrow}
  An \emph{arrow} on a symmetric monoidal category $\cat{C}$ is a functor $A \colon \cat{C}\op \times \cat{C} \to \cat{Set}$ with operations
  \begin{align*}
    \arr &: (X \to Y)\to A~X~Y \\
    (\acmp) & : A~X~Y \to A~Y~Z \to A~X~Z\\
    \first_{X,Y,Z} &: A~X~Y \to A~(X\otimes Z)~(Y\otimes Z) 
  \end{align*}
  that satisfy the following laws:
  \begin{align}
    (a \acmp b) \acmp c &= a \acmp (b \acmp c) \label{eq:arrow1}\\
    \arr (g\circ f) &= \arr f \acmp \arr g \label{eq:arrow2}\\
    \arr \id \acmp a =&\;a = a \acmp\arr \id  \label{eq:arrow3}\\
    \first_{X,Y,I} a \acmp \arr \rho_Y &= \arr \rho_X \acmp a  \label{eq:arrow4}\\
    \first_{X,Y,Z} a \acmp \arr (\id[Y]\otimes f) &=  \arr(\id[X]\otimes f) \acmp \first_{X,Y,Z} a \label{eq:arrow5}\\
    (\first_{X,Y,Z\otimes V} a) \acmp \arr \alpha_{Y,Z,V} &= \arr \alpha_{X,Z,V} \acmp \first (\first a) \label{eq:arrow6}\\
    \first (\arr f)&= \arr (f\otimes\id) \label{eq:arrow7}\\
    \first (a \acmp b)&=(\first a) \acmp (\first b) \label{eq:arrow8}
  \end{align}
  where we use the functional programming convention to write $A~X~Y$ for $A(X,Y)$ and $X\to Y$ for $\hom (X,Y)$
  The \emph{multiplicative fragment} consists of above data except $\first$, satisfying all laws except those mentioning $\first$; we call this a \emph{weak arrow}.

  Define $\second(a)$ by $\arr(\sigma) \acmp \first(a) \acmp \arr(\sigma)$, using the symmetry, so analogs of~\eqref{eq:arrow4}--\eqref{eq:arrow8} are satisfied. Arrows makes sense for (nonsymmetric) monoidal categories if we add this operation and these laws.
\end{definition}

Here come our central definitions.

\begin{definition}
  A \emph{dagger arrow} is an arrow on a monoidal dagger category with an additional operation
  $\inv : A~X~Y \to A~Y~X$
  satisfying the following laws:
  \begin{align}
    \inv (\inv a)&= a \label{eq:daggerarrow1}\\
    \inv a \acmp \inv b &= \inv (b\acmp a) \label{eq:daggerarrow2}\\ 
    \arr (f^\dag)&=\inv (\arr f) \label{eq:daggerarrow3}\\
    \inv (\first a )&=\first (\inv a) \label{eq:daggerarrow4}
  \intertext{An \emph{inverse arrow} is a dagger arrow on a monoidal inverse category such that:}
    (a \acmp \inv a) \acmp a &= a \label{eq:inversearrow1}\\
    (a \acmp \inv a) \acmp (b \acmp \inv b) &= (b \acmp \inv b) \acmp (a \acmp \inv a) \label{eq:inversearrow2}
  \end{align}
  The \emph{multiplicative fragment} consists of above data except $\first$, satisfying all laws except those mentioning $\first$.
\end{definition}

\begin{remark}
  There is some redundancy in the definition of an inverse arrow:
  \eqref{eq:inversearrow1} and~\eqref{eq:inversearrow2} imply \eqref{eq:daggerarrow3} and~\eqref{eq:daggerarrow4}; and~\eqref{eq:daggerarrow3} implies $\inv (\arr \id) = \arr \id$.   
\end{remark}

Like the arrow laws~\eqref{eq:arrow1}--\eqref{eq:arrow8}, in a programming language with inverse arrows, the burden is on the programmer to guarantee~\eqref{eq:daggerarrow1}--\eqref{eq:inversearrow2} for their implementation. If that is done, the language guarantees arrow inversion.

\begin{remark}
  Now follows a long list of examples of inverse arrows, described in a typed first-order reversible functional pseudocode with type classes, inspired by Theseus~\cite{jamessabry:theseus,jamessabry:infeff}, the revised version of Rfun (briefly described in~\cite{kaarsgaardthomsen:rfun}), and Haskell. Type classes are a form of interface polymorphism: A type class is defined by a class specification containing the signatures of functions that a given type must implement in order to be a member of that type class (often, type class membership also informally requires the programmer to ensure that certain equations are required of their implementations). For example, the \<Functor\> type class (in Haskell) is given by the class specification

\begin{haskell}
\hskwd{class} Functor f \hskwd{where} \\
\quad\hsalign{
  fmap &:&\ (a\to{b})\to{f a}\to{f b}
}
\end{haskell}
with the additional informal requirements that \<fmap id = id\> and \<fmap (g\circ{f}) = (fmap g)\circ(fmap f)\> must be satisfied for all instances. For example, lists in Haskell satisfy these equations when defining \<fmap\> as the usual \<map\> function, \ie:
\begin{haskell*}
  \hskwd{instance} Functor List \hskwd{where} \\
  \quad\hsalign{
    fmap &:&\ (a\to{b})\to{List a}\to{List b} \\
    fmap f [] &=&\ [] \\
    fmap f (x{::}xs) &=&\ (f x){::}(fmap f xs)
  }
\end{haskell*}

While higher-order reversible functional programming is fraught, aspects of this can be mimicked by means of parametrized functions. A parametrized function is a function that takes parts of its input statically (\ie, no later than at compile time), in turn lifting the first-order requirement on these inputs. To separate static and dynamic inputs from one another, two distinct function types are used: $a \to b$ denotes that $a$ must be given statically, and $a \leftrightarrow b$ (where $a$ and $b$ are first-order types) denotes that $a$ is passed dynamically. As the notation suggests, functions of type $a \leftrightarrow b$ are reversible. For example, a parametrized variant of the reversible map function can be defined as a function \<map : (a\leftrightarrow{b})\to([a]\leftrightarrow[b])\>. Thus, \<map\> itself is \emph{not} a reversible function, but given statically any reversible function \<f : a\leftrightarrow{b}\>, the parametrized \<map f : ([a]\leftrightarrow[b])\> is.

  Given this distinction between static and dynamic inputs, the signature of $\arr$ becomes $(X \leftrightarrow Y) \to A~X~Y$. 
  We will see later that Arrows on $\cat{C}$ can be modelled categorically as monoids in the functor category $\prof$~\cite{jacobs2009categorical}.
  Definition~\ref{def:arrow} uses the original signature, because this distinction is not present in the irreversible case. Fortunately, the semantics of arrows remain the same whether or not this distinction is made.
\end{remark}

\begin{example}{\emph{(Pure functions)}}\label{ex:pure}
  A trivial example of an arrow is the identity arrow $\hom(-,+)$ which adds
  no computational side-effects at all. This arrow is not as boring as it may look at first. If the identity arrow is an inverse arrow, then the programming language 
  in question is both \emph{invertible} and \emph{closed under program 
  inversion}: any program $p$ has a semantic inverse $\llbracket p \rrbracket^\dag$ (satisfying certain equations), and the semantic inverse coincides with the
  semantics $\llbracket \inv(p) \rrbracket$ of another program $\inv(p)$. As 
  such, $\inv$ must be a sound and complete \emph{program inverter} (see also 
  \cite{kawabeglueck:lrinv}) on pure functions; not a trivial matter at all.
\end{example}

\begin{example}{\emph{(Information effects)}}\label{ex:informationeffects}
  James and Sabry's \emph{information effects}~\cite{jamessabry:infeff} explicitly expose creation and erasure of information as effects.
  This type-and-effect system captures irreversible computation inside a pure reversible setting.

  We describe the languages from~\cite{jamessabry:infeff} categorically, as there is no space for syntactic details.
  Start with the free symmetric bimonoidal category $(\cat{C},\oplus,\otimes,0,1)$ where $\otimes$ and $\oplus$ satisfy the axioms of a commutative semiring up to coherent isomorphism, see~\cite{kelly1974coherence,laplaza1972coherence} for more details.
  Objects interpret types of the reversible language $\Pi$ of bijections, and morphisms interpret terms.
  It turns out that the category $(\cat{C},\otimes,1)$ is then automatically a monoidal inverse category.

  The category $\cat{C}$ carries an arrow, where $A(X,Y)$ is the disjoint union of $\hom(X \otimes H, Y \otimes G)$ where $G$ and $H$ range over all objects, and morphisms $X \otimes H \to Y \otimes G$ and $X \otimes H' \to Y \otimes G'$ are identified when they are equal up to coherence isomorphisms.
  This is an inverse arrow, where $\inv(a)$ is simply $a^{-1}$. 
  It supports the following additional operations:
  \begin{haskell}
    erase &=&\ [ \sigma \colon X \otimes 1 \to 1\otimes X ]_\simeq &  \in A(X,1)\text, \\
    create_X &=&\ [  \sigma\colon 1\otimes X \to X \otimes 1 ]_\simeq & \in A(1,X)\text. 
  \end{haskell}
  James and Sabry show how a simply-typed first order functional irreversible language translates into a reversible one by using this inverse arrow to build implicit communication with a global heap $H$ and garbage dump $G$.
\end{example}

\begin{example}{\emph{(Reversible state)}}\label{ex:reversiblestate}
  Perhaps the prototypical example of an effect is computation with a mutable 
  store of type $S$. In the irreversible case, such computations are performed
  using the state monad \<State S X = S \multimap (X \otimes 
  S)\>, where $S \multimap -$ is the right adjoint to $- \otimes S$, and can
  be thought of as a function type. Morphisms in the
  corresponding Kleisli category are morphisms of the form $X \to S \multimap
  (Y \otimes S)$ in the ambient monoidal closed category. In this 
  formulation, the current state is fetched by \<get : 
  State S S\> defined as \<get s = (s,s)\>, while the state is (destructively) 
  updated by \<put : S\to\mathit{State} S 1\> defined as \<put x s = ((),x)\>.
  
  Such arrows  can not be used as-is in inverse categories, however, as
  canonical examples (such as \PInj{}) fail to be monoidal closed.
  To get around this, note that it follows from monoidal closure that $\hom(X, S
  \multimap (Y \otimes S)) \simeq \hom(X \otimes S, Y \otimes S)$, so that
  $\hom(-\otimes S, -\otimes S)$ is an equivalent arrow that does not depend on
  closure. With this is mind, we define the \emph{reversible state arrow} with a
  store of type $S$:
  \begin{haskell*}
    \hskwd{type} RState S X Y = 
    X\otimes\mathit{S}\leftrightarrow\mathit{Y}\otimes\mathit{S} \\\\\\
    \hskwd{instance} Arrow (RState S) \hskwd{where} \\
    \quad\hsalign{
    arr f (x,s) &=&\ (f x, s) \\
    (a\acmp\mathit{b}) (x,s) &=&\ b (a (x,s)) \\
    first a ((x,z),s) &=&\ \hskwd{let} (x',s')=a (x,s) \hskwd{in} ((x',z),s') 
    } 
    \\\\\\
    \hskwd{instance} InverseArrow (RState S) \hskwd{where} \\
    \quad\hsalign{
    inv a (y,s) &=&\ a^\dagger (y,s)
    }
  \end{haskell*}
  This satisfies the inverse arrow laws. To access the state, we use reversible 
  duplication of values (categorically, this requires the monoidal product to 
  have a natural diagonal $\Delta_X : X \to X \otimes X$, as inverse products 
  do). Syntactically, this corresponds to the following arrow:
  \begin{haskell}
    get &:&\ RState S X (X\otimes\mathit{S}) \\
    get (x,s) &=&\ ((x,s),s)
  \end{haskell}
  The inverse to this arrow is 
  \<assert : RState S (X\otimes\mathit{S}) X\>, 
  which asserts that the current state is
  precisely what is given in its second input component; if this fails, the
  result is undefined. For changing the state, while we cannot destructively
  update it reversibly, we \emph{can} reversibly update it by a given
  reversible function with signature $S \leftrightarrow S$. This gives:
  \begin{haskell}
    update &:&\ (S\leftrightarrow\mathit{S})\to\mathit{RState} S X X \\
    update f (x,s) &=&\ (x, f s)
  \end{haskell}
  This is analogous to how variable assignment works in the reversible 
  programming language Janus~\cite{yokoyamaglueck:janus}: Since destructive
  updating is not permitted, state is updated by means of built-in
  reversible update operators, \eg, updating a variable by adding a constant 
  or the contents of another variable to it, etc.
  
  \cut{$\mathtt{x}~{+}{=}~\mathtt{y}$ (add
  the contents $\mathtt{y}$ to the contents of $\mathtt{x}$),
  $\mathtt{y}~{-}{=}~\mathtt{2}$ (subtract two from the contents of 
  $\mathtt{y}$), and so on.}


\end{example}

\begin{example}{\emph{(Computation in context)}}
  Related to computation with a mutable store is computation with an immutable
  one; that is, computation within a larger context that remains invariant
  across execution. In an irreversible setting, this job is typically handled
  by the \emph{reader monad} (with context of type $C$), defined as \<Reader C
  X = C\rightarrow{X}\>. This approach is fundamentally irreversible, however,
  as the context is ``forgotten'' whenever a value is computed by supplying it
  with a context. Even further, it relies on the reversibly problematic
  notion of monoidal closure.
  
  A reversible version of this idea is one that remembers the context, 
  giving us the reversible Reader arrow:
  \begin{haskell}
    \hskwd{type} Reader C X Y = X\otimes{C}\leftrightarrow\mathit{Y}\otimes{C}
  \end{haskell}
  This is precisely the same as the state arrow -- indeed, the instance
  declarations for \<arr\>, \<(\acmp)\>, \<first\>, and \<inv\> are the same --
  save for the fact that we additionally require \emph{all} \<Reader\> arrows
  $r$ to satisfy \<c=c'\> whenever \<r (x,c) = (y,c')\>. We notice that \<arr
  f\> satisfies this property for all \<f\>, whereas \<(\acmp)\>, \<first\>, and
  \<inv\> all preserve it. This resembles the ``slice'' construction on inverse
  categories with inverse products; see \cite[Sec.~4.4]{giles:thesis}.
  
  As such, while we can provide access to the context via a function defined
  exactly as \<get\> for the reversible state arrow, we cannot provide an
  update function without (potentially) breaking this property -- as intended.
  In practice, the property that the context is invariant across execution can
  be aided by appropriate interface hiding, \ie\ exposing the \<Reader\> type
  and appropriate instance declarations and helpers (such as \<get\> and
  \<assert\>) but leaving the \emph{constructor} for \<Reader\> arrows hidden.
\end{example}

\begin{example}{\emph{(Rewriter)}}
  A particularly useful special case of the reversible state arrow is when the
  store $S$ forms a group. \cut{In the irreversible case, this is related to
  the \emph{Writer} monad typically used to perform tasks such as logging.}
  While group multiplication if seen as a function
  \<G\otimes{G}\leftrightarrow{G}\> is invertible only in degenerate cases, we
  can use parametrization to fix the first argument of the multiplication,
  giving it a much more reasonable signature of
  \<G\to({G}\leftrightarrow{G})\>. In this way, groups can be expressed as
  instances of the type class
  \begin{haskell}
    \hskwd{class} Group G \hskwd{where} \\
    \quad\hsalign{
      gunit &:&\ G\\
      gmul &:&\ G\to({G}\leftrightarrow{G}) \\
      ginv &:&\ G\leftrightarrow{G}
    }
  \end{haskell}
  subject to the usual group axioms. This gives us an arrow of the form
  \begin{haskell}
    \hskwd{type} Rewriter G X Y = X\otimes{G}\leftrightarrow{Y}\otimes{G}
  \end{haskell}
  with instance declarations identical to that of \<RState G\>, save that we 
  require \<G\> to be an instance of the \<Group\> type class.
  With this, adding or removing elements from state of type $G$ can then be
  performed by
  \begin{haskell}
    rewrite &:&\ G\to{Rewriter} G X X\\
    rewrite a (x, b) &=&\ (x, gmul a b)
  \end{haskell}
  which ``rewrites'' the state by the value $a$ of type $G$. Note that while
  the name of this arrow was chosen to be evocative of the \emph{Writer} monad
  known from irreversible functional programming, as it may be used for similar
  practical purposes, its construction is substantially different (\ie,
  irreversible \emph{Writer} arrows are maps of the form $X \to Y \times M$
  where $M$ is a monoid).
  \cut{filling a similar role as the \<tell\> function does for the \<Writer\>
  monad in the irreversible case.}
\end{example}

\begin{example}{\emph{(Vector transformation)}}
  Vector transformations, that is, functions on lists that preserve the length
  of the list, form another example of inverse arrows. The \<Vector\> arrow is 
  defined as follows:
  \begin{haskell*}
    \hskwd{type} Vector X Y = 
    [X]\leftrightarrow[Y]\\\\\\
    \hskwd{instance} Arrow (Vector) \hskwd{where} \\
    \quad\hsalign{
    arr f xs &=&\ map f xs \\
    (a\acmp{b}) xs &=&\ b (a xs) \\
    first a ps &=&\ \hskwd{let} (xs,zs) = zip^\dagger ps \hskwd{in} zip (a xs,
    zs)
    } \\\\\\
    \hskwd{instance} InverseArrow (Vector) \hskwd{where} \\
    \quad\hsalign{
    inv a ys &=&\ a^\dagger ys
    }
  \end{haskell*}
  The definition of \<first\> relies on the usual \<map\> and \<zip\> 
  functions, which are defined as follows:\\
\begin{tabular}{ll}
\begin{minipage}{.4\textwidth}
\begin{haskell}
map &:&\ (a\leftrightarrow{b})\to([a]\leftrightarrow[b])\\
map f [] &=&\ []\\
map f (x{::}xs) &=&\ (f x){::}(map f xs)
\end{haskell}
\end{minipage}
&
\begin{minipage}{.4\textwidth}
\begin{haskell}
zip &:&\ ([a],[b])\leftrightarrow[(a,b)]\\
zip ([],[]) &=&\ []\\
zip (x{::}xs,y{::}ys) &=&\ (x,y){::}(zip (xs,ys))\\
\end{haskell}
\end{minipage}
\end{tabular}\\
  Notice that preservation of length is required for \<first\> to work: if the 
  arrow \<a\> does not preserve the length of \<xs\>, then \<zip (a xs, 
  zs)\> is undefined. However, since \<arr\> lifts a pure function $f$ to 
  a \<map\> (which preserves length), and \<(\acmp)\> and \<inv\> are 
  given by the usual composition and inversion, the interface maintains this property.
\end{example}

\begin{example}{\emph{(Reversible error handling)}}
  An inverse \emph{weak arrow} comes from reversible computation
  with a possibility for failure. The weak \<Error\> arrow is defined using disjointness tensors as
  follows:
  \begin{haskell*}
    \hskwd{type} Error E X Y = 
    X\oplus{E}\leftrightarrow{Y}\oplus{E}\\\\\\
    \hskwd{instance} WeakArrow (Error E) \hskwd{where} \\
    \quad\hsalign{
    arr f (InL x) &=&\ InL (f x) \\
    arr f (InR e) &=&\ InR e \\
    (a\acmp{b}) x &=&\ b (a x) \\
    } \\\\\\
    \hskwd{instance} InverseWeakArrow (Error E) \hskwd{where} \\
    \quad\hsalign{
    inv a y &=&\ a^\dagger y
    }
  \end{haskell*}
  In this definition, we think of the type \<E\> as the type of \emph{errors}
  that could occur during computation. As such, a pure function $f$ lifts to a
  weak arrow which always succeeds with value $f(x)$ when given a
  nonerroneous input of $x$, and always propagates errors that may have occured
  previously.
  
  Raising an error reversibly requires more work than in the irreversible case,
  as the effectful program that produces an error must be able to
  \emph{recover} from it in the converse direction. In this way, a reversible
  \<raise\> requires two pieces of data: a function of type
  \<X\leftrightarrow{E}\> that transforms problematic inputs into appropriate
  errors; and a choice function of type \<E\leftrightarrow{E}\oplus{E}\> that
  decides if the error came from this site, injecting it to the left if it did,
  and to the right if it did not. The latter choice function is critical, as in 
  the converse direction it decides whether the error should be handled 
  immediately or later. Thus we define \<raise\> as follows:
  \begin{haskell}
    raise &:&\ 
    (X\leftrightarrow{E})\to(E\leftrightarrow{E}\oplus{E})\to{Error} E X Y \\
    raise f p x &=&\ InR (p^\dagger (arr f x)))
  \end{haskell}
  The converse of \<raise\> is \<handle\>, an (unconditional) error handler that
  maps matching errors back to succesful output values. Since unconditional
  error handling is seldom required, this can be combined with control flow
  (see Example~\ref{ex:control_flow}) to perform conditional error handling,
  \ie\ to only handle errors if they occur.
\end{example}

\begin{example}{\emph{(Serialization)}}
  When restricting our attention, as we do here, to only first-order reversible
  functional programming languages, another example of inverse arrows arises in
  the form of \emph{serializers}. A serializer is a function that transforms an
  internal data representation into one more suitable for storage, or for
  transmission to other running processes. To transform serialized data back
  into an internal representation, a suitable deserializer is used. 
  
  When restricting ourselves to the first-order case, it seems reasonable to
  assume that all types are serializable, as we thus avoid the problematic 
  case of how to serialize data of function type. As such, assuming that all 
  types \<X\> admit a function \<serialize : X\leftrightarrow{Serialized} X\>
  (where \<Serialized X\> is the type of serializations of data of type X), we
  define the \<Serializer\> arrow as follows:
\begin{haskell*}
\hskwd{type} Serializer X Y = 
X\leftrightarrow{Serialized} Y\\\\\\
\hskwd{instance} Arrow (Serializer) \hskwd{where} \\
\quad\hsalign{
arr f x &=&\ serialize (f x) \\
(a\acmp{b}) x &=&\ b (serialize^\dagger (a x)) \\
first a (x,z) &=&\ serialize (serialize^\dagger (a x), z)
} \\\\\\
\hskwd{instance} InverseArrow (Serializer) \hskwd{where} \\
\quad\hsalign{
inv a y &=&\ serialize (a^\dagger (serialize y))
}
\end{haskell*}
  Notice how \<serialize^\dagger : Serialized X\leftrightarrow{X}\> takes the
  role of a (partial) deserializer, able to recover the internal representation
  from serialized data as produced by the serializer. A deserializer of the
  form \<serialize^\dagger\> will often only be partially defined, since many
  serialization methods allow many different serialized representations of the
  same data (for example, many textual serialization formats are whitespace
  insensitive). In spite of this shortcoming, partial deserializers produced by 
  inverting serializers are sufficient for the above definition to satisfy the 
  inverse arrow laws.
\end{example}

\cut{
\begin{example}[Concurrency]\label{ex:reversibleio}
  Another frequently used effect is input and output, allowing programs 
  means of communication with the surrounding environment. The concurrency 
  arrow can be seen as a special case of the reversible state arrow as follows.
  We construct a formal object $E$ of \emph{environments} and formal partial
  isomorphisms $E \to E$ corresponding to reversible transformations of this
  environment (\eg, exchange of data between running processes). We then 
  define
  \begin{equation*}
    \mathtt{Con}~X~Y = X \otimes E \leftrightarrow Y \otimes E\text,
  \end{equation*}
  with $\arr$, $\first$, $\acmp$, and $\inv$ as in
  $\mathtt{RState}$, satisfying the inverse arrow laws.
  
  This definition is accurate, but also somewhat unsatisfying operationally. To
  reify this idea, consider the following reversible input/output protocol. Let
  $E$ be an object of \emph{process tables}, containing all necessary
  information about currently running processes (such as internal state).
  Suppose a (suitably reversible) interface for buffers exists. With this, we
  can imagine a family of morphisms
  \begin{equation*}
    \mathtt{exchange} : \mathtt{Process} \to (\mathtt{Con}~X~Y)
  \end{equation*}
  where $\mathtt{Process}$ is a type of \emph{process handles}, and $X$ and $Y$
  are instances of the reversible buffer interface. Operationally,
  $\mathtt{exchange}$ exchanges control of the buffer of the currently running
  process for the one of the process it communicates with, and vice versa. For
  example, if $p_1$ calls $\mathtt{exchange}~p_2$ ``$\mathtt{cat}$'' and $p_2$
  calls $\mathtt{exchange}~p_1$ ``$\mathtt{dog}$'', after synchronization the
  buffer received by $p_1$ will contain ``$\mathtt{dog}$'' while the one
  received by $p_2$ will contain ``$\mathtt{cat}$''.
  
  While this simple protocol is reversible, it sidesteps the asymmetry of process 
  communication; one process sends a message, another 
  process waits to receive it. To accommodate this, one could agree that a process expecting to \emph{read} from another process should exchange the empty buffer,
  while a process that \emph{writes} to another process should expect the empty
  buffer in return. Thus, asymmetric reading and writing reduce to symmetric buffer exchange.
  See also the literature on reversible CCS~\cite{danoskrivine:reversibleccs} and reversible variations of the $\pi$-calculus~\cite{cristescuetal:picalculus}.
\end{example}}

\begin{example}{\emph{(Dagger Frobenius monads)}}\label{ex:frobeniusmonads}
  Monads are also often used to capture computational side-effects. Arrows are more general.
  If $T$ is a strong monad, then $A=\hom(-,T(+))$ is an arrow: $\arr$ is given by the unit, $\acmp$ is given by Kleisli composition, and $\first$ is given by the strength maps.
  What happens when the base category is a dagger or inverse category modelling reversible pure functions?

  A monad $T$ on a dagger category is a \emph{dagger Frobenius monad} when it satisfies $T(f^\dag)=T(f)^\dag$ and $T(\mu_X) \circ \mu_{T(X)}^\dag = \mu_{T(X)} \circ T(\mu_X^\dag)$. 
  The Kleisli category of such a monad is again a dagger category~\cite[Lemma~6.1]{heunenkarvonen:daggermonads}, giving rise to an operation $\inv$ satisfying~\eqref{eq:daggerarrow1}--\eqref{eq:daggerarrow3}. 
  A dagger Frobenius monad is strong when the strength maps are unitary. In this case~\eqref{eq:daggerarrow4} also follows.
  If the underlying category is an inverse category, then $\mu \circ \mu^\dag \circ \mu = \mu$, whence $\mu \circ \mu^\dag = \id$, and~\eqref{eq:inversearrow1}--\eqref{eq:inversearrow2} follow. Thus, if $T$ is a strong dagger Frobenius monad on a dagger/inverse category, then $A$ is a dagger/inverse arrow.
  The Frobenius monad $T(X)=X \otimes \mathbb{C}^2$ on the category of Hilbert spaces captures measurement in quantum computation~\cite{heunenkarvonen:reversiblemonads}, giving a good example of capturing an irreversible effect in a reversible setting. For more examples see~\cite{heunenkarvonen:daggermonads}.  
\end{example}

\begin{example}{\emph{(Restriction monads)}}
  There is a notion in between the dagger and inverse arrows of the previous example.
  A \emph{(strong) restriction monad} is a (strong) monad on a (monoidal) restriction category whose underlying endofunctor is a restriction functor.
  The Kleisli-category of a restriction monad $T$ has a natural restriction structure: just define the restriction of $f\colon X\to T(Y)$ to be $\eta_X\circ\bar{f}$. The functors between the base category and the Kleisli category then become restriction functors.
  If $T$ is a strong restriction monad on a monoidal restriction category \cat{C}, then $\Inv(\cat{C})$ has an inverse arrow $(X,Y)\mapsto (\Inv(\Kl(T)))(X,Y)$.
\end{example}

\begin{example}{\emph{(Control flow)}}\label{ex:control_flow}
  While only trivial inverse categories have coproducts~\cite{giles:thesis},
  less structure suffices for reversible control structures. When the domain
  and codomain of an inverse arrow both have disjointness tensors (see
  Definition~\ref{def:disjtensor}), it can often be used to implement
  $\mathit{ArrowChoice}$. For a simple example, the pure arrow on an inverse
  category with disjointness tensors implements \<left : A X Y\to{A}
  (X\oplus{Z}) (Y\oplus{Z})\> as
  \begin{haskell}
    left f (x, z) = (f x, z)
  \end{haskell}
  The laws of \<ArrowChoice\>~\cite{hughes:arrows} simply reduce to $- \oplus
  -$ being a bifunctor with natural quasi-injections. More generally, the laws
  amount to preservation of the disjointness tensor. For the reversible state
  arrow (Example~\ref{ex:reversiblestate}), this hinges on $\otimes$
  distributing over $\oplus$.
  
  The splitting combinator $(+\!\!\!\!+\!\!\!\!+)$ is unproblematic for reversiblity, but the fan-in combinator $(|||)$ cannot be
  defined reversibly, as it explicitly deletes information about which branch 
  was chosen. Reversible conditionals thus require two 
  predicates: one determining the branch to take, and one asserted to join the branches after execution. The branch-joining predicate must be chosen carefully to ensure that it is always true after the
  \emph{then}-branch, and false after the \emph{else}-branch. This
  is a standard way of handling branch joining
  reversibly~\cite{yokoyamaglueck:janus,yokoyamaetal:rfun,glueckkaarsgaard:rfcl}.
\end{example}

\cut{
\begin{example}{\emph{(Rewriter)}}\label{ex:rewriter}
  In irreversible computing, another example of an effect is the
  \emph{writer monad} (useful for \eg\ logging), which in its arrow form is
  given by the Kleisli category $\Kl(-\otimes M)$ for a monoid object $M$. In
  the reversible case, we need not just to be able to
  ``write'' entries into our log, but also to ``unwrite'' them
  again. That is, we need a group object $G$ instead of a monoid $M$.
  But that is not enough, as group multiplication $G \otimes G \to G$ is generally not reversible.

  Inverse arrows sidestep this issue: given group $G$ in a monoidal restriction category \cat{C},
  the functor $- \otimes G$ is a (strong) restriction monad on \cat{C}. 
  Now $(X,Y) \mapsto (\Inv(\Kl(- \otimes G)))(X,Y)$ gives an 
  inverse arrow on $\Inv(\cat{C})$.

  Morphisms of $\Inv(\Kl(- \otimes G))$ are morphisms $f \colon X \to Y 
  \otimes G$ of \cat{C} for which there exists a unique $g \colon Y \to X \otimes G$
  making the following diagram in \cat{C} (together with a similar diagram with the roles of $f$ and $g$ interchanged) commute:
  \[\begin{tikzpicture}[xscale=1.2]
    \node (X) {$X$};
    \node[below of=X] (X') {$X$};
    \node[right=1cm of X] (YG)  {$Y \otimes G$};
    \node[below of=YG] (XI) {$X \otimes I$};
    \node[right=1cm of YG] (XGG) {$X \otimes G \otimes G$};
    \node[below of=XGG] (XG) {$X \otimes G$};
    
    \draw[->] (X) to node[above] {$f$} (YG);
    \draw[->] (YG) to node[above] {$g \otimes \id[G]$} (XGG);
    \draw[->] (XGG) to node[right] {$\id[X] \otimes \mu$} (XG);
    
    \draw[->] (X) to node[left] {$\bar{f}$} (X');
    \draw[->] (X') to node[above] {$\rho_X$} (XI);
    \draw[->] (XI) to node[above] {$\id[X] \otimes \eta$} (XG);
  \end{tikzpicture}\]
  Here $\eta$ and $\mu$ are the unit respectively the
  multiplication of the group object $G$. For example, when \cat{C} is \Pfn{},
  and $G$ is an ordinary group written multiplicatively, if $f$ is such a partial isomorphism,
  then $f(x) = (y,h)$ for some particular $x \in X$ iff its unique inverse $g$ 
  satisfies $g(y) = (x, h^{-1})$.
  
  As with the irreversible writer monad, given a particular element of $G$,
  we can write this element to the log by means of the family
  \begin{align*}
    & \mathtt{rewrite} : G \to \mathtt{Rewriter}~X~X \\
    & \mathtt{rewrite}~g~x = (x,g) \enspace.
  \end{align*}
  A message $g$ can then be ``unwritten'' by $\mathtt{rewrite}~g^{-1}$, the partial inverse of $\mathtt{rewrite}~g$.
  For a toy example, take $G=(\mathbb{Z},+)$; we may then think of `writing' 1 to the log as typing a dot, and `unwriting' 1, or equivalently `writing' -1, as typing backspace, thus modelling a simple progress bar.
  
  It might be difficult to construct inverses of morphisms in $\Inv(\Kl(- \otimes G))$ in general: given such an $f$, we are not aware of a formula that expresses the inverse $g$ in terms of $f$ and operations in $\Inv(\cat{C})$. Thus, even though the partial inverse is guaranteed to exist, computing it might be hard. However, when 
  \cat{C} is \Pfn, this works out by observing that any partial isomorphism $f$ 
  of $\Kl(- \otimes G)$ factors as a pure arrow 
  followed by an arrow 
  of the form $\left\langle \id,h\right\rangle$ for some $h\colon X \to G$ in \Pfn, but it is not clear if 
  this is the case for an arbitrary restriction category \cat{C}.
\end{example}
\changed{\textbf{Robin:} This one is maybe a bit sketchy, as it seems problematic to come up with a definition for $\inv$. Perhaps rework it to a special case of state arrows?}
}
\cut{
\begin{example}{\emph{(Recursion)}}\label{ex:recursion}
  Inverse categories can be outfitted with \emph{joins} on hom sets, giving 
  rise to \cat{DCPO}-enrichment, and in particular to a \emph{fixed point 
  operator}
  \begin{equation*}
    \fix_{X,Y} \colon (\hom(X,Y) \to \hom(X,Y)) \to \hom(X,Y)
  \end{equation*}
  on continuous
  functionals~\cite{kaarsgaardaxelsengluck:joininversecategories}. Such joins
  can be formally adjoined to any inverse category $\cat{C}$, 
  yielding an inverse category $J(\cat{C})$ with 
  joins and a faithful inclusion functor $I \colon \cat{C} \to
  J(\cat{C})$~\cite[Sec.~3.1.3]{guo:thesis}. 
  
  With this we may obtain an inverse arrow $\hom(I(-), I(+))$ for recursion as
  an effect. Such an arrow could be useful in a reversible programming language
  that seeks to guarantee properties like termination and totality for pure
  functions, as these properties can no longer be guaranteed when general
  recursion is thrown into the mix. Given such an arrow \<RFix X Y\>,
  one would get a fixed point operator of the form
   \<fix : (RFix X Y\to{RFix} X Y)\to{RFix} X Y,\>
  provided that all expressible functions of type \<RFix X Y\to{RFix} X Y\> can 
  be shown to be continuous (\eg, by showing that all such functions must be
  composed of only continuous things\cut{, such as actions of locally
  continuous functors on morphisms, etc.}). This operator could then be used to
  define recursive functions, while maintaining a type-level separation of
  terminating and potentially non-terminating functions.

  The concept of recursion requires no modification to work reversibly, and may 
  even be implemented as usual using a call stack~\cite{yokoyamaetal:rfun}. 
  We illustrate the concept of reversible recursion by two examples: Consider 
  the reversible addition function, mapping a pair of natural numbers $(x,y)$
  to the pair $(x,x+y)$. This can be implemented as a 
  recursive reversible function~\cite{yokoyamaetal:rfun}. Since this function
  returns both the sum and the first component of the input pair
  (addition on its own is irreversible), it stores in the output the number
  of times the inverse function must ``unrecurse'' to get back to the original
  pair.
  
  Another example is the reversible Fibonacci function, mapping a natural 
  number $n$ to the pair $(x_n, x_{n+1})$ where each $x_i$ is the $i$'th number 
  in the Fibonacci series. \cut{(the function $n \mapsto x_n$ 
  is not invertible, as the first and second Fibonacci numbers coincide)} 
  This may also be implemented as a reversible recursive 
  function. Here, however, the number of times that the inverse function must 
  ``unrecurse'' is given only implicitly in the output: The inverse iteratively 
  computes $(x_i, x_{i+1}) \mapsto (x_{i+1} - x_i, x_i)$ until the result 
  becomes $(0,1)$ -- the first Fibonacci pair -- and then returns the 
  number of iterations it had to perform. If the inverse is given a pair of 
  natural numbers that is not a Fibonacci pair, the result is undefined (\ie, 
  the inverse function may never terminate, or may produce a garbage output).
  
\end{example}
}

\begin{example}{\emph{(Superoperators)}}\label{ex:cpm}
  Quantum information theory has to deal with environments.
  The basic category $\cat{FHilb}$ is that of finite-dimensional Hilbert spaces and linear maps. 
  But because a system may be entangled with its environment, the only morphisms that preserve states are the so-called {superoperators}, or \emph{completely positive} maps~\cite{selinger:dagger,coeckeheunen:completepositivity}: they are not just positive, but stay positive when tensored with an arbitrary ancillary object. In a sense, information about the system may be stored in the environment without breaking the (reversible) laws of nature. This leads to the so-called CPM construction. It is infamously known \emph{not} to be a monad. But it \emph{is} a dagger arrow on $\cat{FHilb}$, where $A~X~Y$ is the set of completely positive maps $X^* \otimes X \to Y^* \otimes Y$, $\arr f = f_* \otimes f$, $a \acmp b = b \circ a$, $\first_{X,Y,Z} a = a \otimes \id[Z^* \otimes Z]$, and $\inv a = a^\dag$.
\end{example}

Aside from these, other examples do fit the interface of inverse arrows, though
they are less syntactically interesting as they must essentially be ``built
in'' to a particular programming language. These include reversible IO, which
functions very similarly to irreversible IO, and reversible recursion, which
could be used to give a type-level separation between terminating and
potentially non-terminating functions, by only allowing fixed points of
parametrized functions between arrows rather than between (pure) functions.

\section{Inverse arrows, categorically}\label{sec:arrowscategorically}

This section explicates the categorical structure of inverse arrows. Arrows on $\cat{C}$ can be modelled categorically as monoids in the functor category $\prof$~\cite{jacobs2009categorical}. They also correspond to certain identity-on-objects functors $J\colon \cat{C}\to\cat{D}$. The category $\cat{D}$ for an arrow $A$ is built by $\cat{D}(X,Y)=A~X~Y$, and $\arr$ provides the functor $J$. We will only consider the multiplicative fragment, except for remark~\ref{rem:strength}. The operation $\first$ can be incorporated in a standard way using strength~\cite{jacobs2009categorical,asada2010arrows}, and poses no added difficulty in the reversible setting.


Clearly, dagger arrows correspond to $\cat{D}$ being a dagger category and $J$ a dagger functor, whereas inverse arrows correspond to both $\cat{C}$ and $\cat{D}$ being inverse categories and $J$ a (dagger) functor. This section addresses the following question: which monoids correspond to dagger arrows and inverse arrows? In the dagger case, the answer is quite simple: the dagger makes $\prof$ into an involutive monoidal category, and then dagger arrows correspond to involutive monoids. Inverse  arrows furthermore require certain diagrams to commute. 

\begin{definition} 
  An \emph{involutive monoidal category} is a monoidal category $\cat{C}$ equipped with an \emph{involution}: a functor $\overline{(\ )}\colon \cat{C}\to \cat{C}$  satisfying $\overline{\overline{f}}=f$ for all morphisms $f$, together with a  natural isomorphism
  $
    \chi_{X,Y}\colon \overline{X}\otimes\overline{Y}\to \overline{Y\otimes X} 
  $
  that makes the following diagrams commute\footnote{There is a more general definition allowing a natural isomorphism $\overline{\overline{X}}\to X$ (see~\cite{egger2011involutive} for details), but we only need the strict case.}:
  \[
      \begin{aligned}\begin{tikzpicture}
        \matrix (m) [matrix of math nodes,row sep=2em,column sep=4em,minimum width=2em]
          {\overline{X}\otimes(\overline{Y}\otimes \overline{Z}) & (\overline{X}\otimes\overline{Y})\otimes \overline{Z} \\
            \overline{X}\otimes \overline{Z\otimes Y} &\overline{Y\otimes X}\otimes \overline{Z} \\
            \overline{(Z\otimes Y)\otimes X} & \overline{Z\otimes(Y\otimes X)} \\};
          \path[->]
          (m-1-1) edge node [left] {$\id\otimes\chi$} (m-2-1)
                edge node [above] {$\alpha$} (m-1-2)
          (m-1-2) edge node [right] {$\chi\otimes\id$} (m-2-2)
          (m-2-2) edge node [right] {$\chi$} (m-3-2)
          (m-3-2) edge node [below] {$\overline{\alpha}$} (m-3-1)
          (m-2-1) edge node [left] {$\alpha$} (m-3-1);
      \end{tikzpicture}\end{aligned}
      \qquad\qquad
      \begin{aligned}\begin{tikzpicture}
          \matrix (m) [matrix of math nodes,row sep=2em,column sep=4em,minimum width=2em]
          {\overline{\overline{X}}\otimes \overline{\overline{Y}}  & \overline{\overline{Y}\otimes\overline{X}}  \\
            X\otimes Y& \overline{\overline{X\otimes Y}}  \\};
          \path[->]
          (m-1-1) edge node [left] {$\id$} (m-2-1)
                edge node [above] {$\chi$} (m-1-2)
          (m-1-2) edge node [right] {$\overline{\chi}$} (m-2-2)
          (m-2-1) edge node [below] {$\id$} (m-2-2);
      \end{tikzpicture}\end{aligned}
  \]
\end{definition}

Just like monoidal categories are the natural setting for monoids, involutive monoidal categories are the natural setting for involutive monoids. Any involutive monoidal category has a canonical isomorphism $\phi\colon I\to\overline{I}$~\cite[Lemma~2.3]{egger2011involutive}:
\[\begin{tikzpicture}[xscale=3]
  \node (1) at (0,0) {$I=\overline{\overline{I}}$};
  \node (2) at (1,0) {$\overline{\overline{I} \otimes I}$};
  \node (3) at (2,0) {$\overline{I} \otimes \overline{\overline{I}} = \overline{I} \otimes I$};
  \node (4) at (3,0) {$\overline{I}$};
  \draw[->] (1) to node[above]{$\overline{\rho_{\overline{I}}}^{-1}$} (2);
  \draw[->] (2) to node[above]{$\chi_{I,\overline{I}}^{-1}$} (3);
  \draw[->] (3) to node[above]{$\rho_{\overline{I}}$} (4);
\end{tikzpicture}\]
Moreover, any monoid $M$ with multiplication $m$ and unit $u$ induces a monoid on $\overline{M}$ with multiplication $\overline{m}\circ \chi_{M,M}$ and unit $\overline{u}\circ\phi$. This monoid structure on $\overline{M}$ allows us to define involutive monoids.

\begin{definition}
  An \emph{involutive monoid} is a monoid $(M,m,u)$ together with a monoid homomorphism $i\colon \overline{M}\to M$ satisfying $i\circ\overline{i}=\id$. A \emph{morphism} of involutive monoids is a monoid homomorphism $f\colon M\to N$ making the following diagram commute:
  \[
      \begin{aligned}\begin{tikzpicture}
        \matrix (m) [matrix of math nodes,row sep=1.5em,column sep=4em,minimum width=2em]
        {\overline{M}  & \overline{N}  \\
          M& N  \\};
        \path[->]
        (m-1-1) edge node [left] {$i_M$} (m-2-1)
            edge node [above] {$\overline{f}$} (m-1-2)
        (m-1-2) edge node [right] {$i_N$} (m-2-2)
        (m-2-1) edge node [below] {$f$} (m-2-2);
      \end{tikzpicture}\end{aligned}
  \]
\end{definition}

Our next result lifts the dagger on $\cat{C}$ to an involution on the category $[\cat{C}\op \times \cat{C},\cat{Set}]$ of profunctors. First we recall the monoidal structure on that category. It categorifies the dagger monoidal category $\cat{Rel}$ of relations of Section~\ref{sec:inversecategories}~\cite{borceux}.
\begin{definition}
  If \cat{C} is small, then $\prof$ has a monoidal structure
      \begin{equation*}
        F\otimes G (X,Z)= \int^Y F(X,Y)\times G(Y,Z)\text;
      \end{equation*}
  concretely, $F\otimes G (X,Z)= \coprod_{Y\in \cat{C}} F(X,Y)\times G(Y,Z)/\approx$, 
  where $\approx$ is the equivalence relation generated by $(y,F(f,\id)(x))\approx (G(\id,f)(y),x)$, and the action on morphisms is given by $F\otimes G (f,g):= [y,x]_\approx \mapsto [F(f,\id)x,G(\id,g)y]$. The unit of the tensor product is $\hom_\cat{C}$.
\end{definition}

\begin{proposition}\label{prop:swapdaggeronprof}
  If \cat{C} is a dagger category, then $\prof$ is an involutive monoidal category when one defines the involution on objects $F$ by $\overline{F}(X,Y) = F(Y,X)$, $\overline{F}(f,g)=F(g^\dag,f^\dag)$ and on morphisms $\tau\colon F\to G$ by $\overline{\tau}_{X,Y}=\tau_{Y,X}$.
\end{proposition}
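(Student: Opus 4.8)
The plan is to verify directly that the assignment $F \mapsto \overline F$ given in the statement is a well-defined involutive structure on $\prof$, checking the three things required by the definition: (i) $\overline{(\ )}$ is a functor with $\overline{\overline{F}} = F$ and $\overline{\overline{\tau}} = \tau$; (ii) there is a natural isomorphism $\chi_{F,G}\colon \overline F \otimes \overline G \to \overline{G \otimes F}$; and (iii) $\chi$ makes the hexagon and the double-bar square commute. First I would check (i): that $\overline F$ is genuinely a functor $\cat{C}\op \times \cat{C} \to \cat{Set}$ requires noting that $F(g^\dag, f^\dag)$ has the right variance — precomposing the contravariant slot of $F$ with $g^\dag$ and the covariant slot with $f^\dag$ turns a morphism $(f,g)$ in $\cat C\op \times \cat C$ into the correctly-typed function — and functoriality then follows from functoriality of $F$ together with $(\ )^\dag$ being a contravariant functor with $f^{\dag\dag}=f$. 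The latter identity also gives $\overline{\overline F} = F$ immediately, and $\overline{\overline\tau}=\tau$ is trivial; naturality of $\overline\tau$ follows from that of $\tau$ by the same bookkeeping.

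Next I would construct $\chi$. Unwinding the coend formula, $(\overline F \otimes \overline G)(X,Z) = \int^Y \overline F(X,Y) \times \overline G(Y,Z) = \int^Y F(Y,X)\times G(Z,Y)$, while $\overline{G\otimes F}(X,Z) = (G\otimes F)(Z,X) = \int^Y G(Z,Y)\times F(Y,X)$. So the only thing $\chi_{F,G}$ has to do is swap the two factors of the product inside the coend, $[y, x]_\approx \mapsto [x', y']$ with the factors transposed; one checks this respects the equivalence relation $\approx$ (the defining relation is symmetric in exactly the way needed once one tracks how the dagger converts the $F(f,\id)$/$G(\id,f)$ reindexings into each other), so it descends to a well-defined map, and swapping back is its inverse. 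Naturality of $\chi$ in $F$ and $G$, and in the object variables, is a routine check against the formula for the action on morphisms.

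The main obstacle — and the step that deserves the most care — is coherence (iii), i.e.\ checking that the hexagon relating $\chi$, $\alpha$, and $\overline\alpha$ commutes, and that the degenerate square for $\overline{\overline{(\ )}}$ commutes. Both reduce, after unfolding all the coends, to statements about permuting factors in an iterated product $F(\cdots)\times G(\cdots)\times H(\cdots)$ under a coend over two intermediate objects; the hexagon says two ways of reversing the order of three factors agree, which is the classical ``reversing a word of length three'' identity, and it holds on the nose because everything is literally a reshuffle of tuple components, with the associators $\alpha$ in $\prof$ themselves being the evident reassociation maps on coends. The bookkeeping is what makes this tedious rather than deep: one must be careful that the $\cat C\op\times\cat C$-variance conventions and the $(\ )^\dag$'s line up so that the two composite isomorphisms around the hexagon are comparing the same underlying set map. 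I would organize this by first recording explicit representative-level formulas for $\alpha$ on $\prof$ and for $\chi$, then evaluating both legs of each diagram on a representative $[h,g,f]$ (resp. a representative in the double-bar square) and observing they coincide. Since $\cat{Set}$-valued coends are computed as quotients of coproducts and all maps in sight are induced by functions on representatives, well-definedness on $\approx$-classes having already been established, equality of the two legs is equality of explicit functions, which is immediate.
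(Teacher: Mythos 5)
Your proposal is correct and follows essentially the same route as the paper: verify functoriality and involutivity of $\overline{(\ )}$, define $\chi_{F,G}$ as the swap of the two factors inside the coend (using $\overline{F}\otimes\overline{G}(X,Z)\cong\int^Y F(Y,X)\times G(Z,Y)\cong\overline{G\otimes F}(X,Z)$), and check the coherence diagrams by direct computation. The paper simply declares the coherence checks ``routine'' where you sketch how to carry them out on coend representatives, but there is no substantive difference in approach.
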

\begin{proof}
  First observe that $\overline{(\ )}$ is well-defined: For any natural transformation of profunctors $\tau$, $\overline{\tau}$ is natural, and $\tau\mapsto \overline{\tau}$ is functorial.
  Define $\chi_{F,G}$ by the following composite of natural isomorphisms: 
    \begin{align*}
      \overline{F}\otimes \overline{G} (X,Z)
      &\cong \textstyle\int^Y \overline{F}(X,Y)\times\overline{G}(Y,Z)\text{ by definition of }\otimes \\
      &= \textstyle\int^Y F(Y,X)\times G(Z,Y)\text{ by definition of }\overline{(\ )} \\
      &\cong \textstyle\int^Y G(Z,Y)\times F(Y,X) \text{ by symmetry of }\times\\
      &\cong G\otimes F (Z,X)\text{ by definition of }\otimes \\
      &=\overline{G\otimes F} (X,Z)\text{ by definition of }\overline{(\ )}
     \end{align*}
  Checking that $\chi$ make the relevant diagrams commute is routine.
\end{proof}

\begin{theorem} 
  If \cat{C} is a dagger category, the multiplicative fragments of dagger arrows on \cat{C} correspond exactly to involutive monoids in \prof.
\end{theorem}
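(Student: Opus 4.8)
The plan is to leverage the known equivalence between multiplicative fragments of arrows on $\cat{C}$ (weak arrows) and monoids in $\prof$~\cite{jacobs2009categorical}, under which the monoid multiplication is $\acmp$, the unit is $\arr$, and the functorial action of $A\colon\cat{C}\op\times\cat{C}\to\Set$ on a pair $(f\colon X'\to X,\, g\colon Y\to Y')$ is $a\mapsto\arr f\acmp a\acmp\arr g$. Fixing such a weak arrow together with its monoid $(A,m,u)$, it then suffices to show that supplying an operation $\inv\colon A~X~Y\to A~Y~X$ satisfying~\eqref{eq:daggerarrow1}--\eqref{eq:daggerarrow3} is the same thing as upgrading $(A,m,u)$ to an involutive monoid, with $\inv$ and the structure map $i\colon\overline{A}\to A$ identified componentwise: $i_{X,Y}=\inv\colon A(Y,X)\to A(X,Y)$, using $\overline{A}(X,Y)=A(Y,X)$ from Proposition~\ref{prop:swapdaggeronprof}.

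First I would verify that such an $\inv$ assembles into a natural transformation $i\colon\overline{A}\to A$. Since $\overline{A}(f,g)=A(g^\dag,f^\dag)$, the naturality square at $(f,g)$ amounts, for $a\in A(Y,X)$, to $\arr f\acmp\inv(a)\acmp\arr g=\inv(\arr(g^\dag)\acmp a\acmp\arr(f^\dag))$; expanding the right-hand side with~\eqref{eq:daggerarrow2} and then rewriting $\inv(\arr(f^\dag))$ and $\inv(\arr(g^\dag))$ via~\eqref{eq:daggerarrow3} and $f^{\dag\dag}=f$ gives the left-hand side. Next I would check that $i$ is a homomorphism onto the induced monoid structure on $\overline{A}$, whose multiplication is $\overline{m}\circ\chi_{A,A}$ and unit $\overline{u}\circ\phi$. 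Unfolding the swap isomorphism $\chi$ of Proposition~\ref{prop:swapdaggeronprof} on a coend class $[b,a]\in\int^{Y}A(Y,X)\times A(Z,Y)$, the multiplicativity condition $i\circ(\overline{m}\circ\chi)=m\circ(i\otimes i)$ becomes precisely $\inv(a\acmp b)=\inv b\acmp\inv a$, i.e.~\eqref{eq:daggerarrow2}; and identifying the canonical $\phi\colon I\to\overline{I}$ on $\hom_{\cat{C}}$ as $f\mapsto f^\dag$, the unit condition $i\circ(\overline{u}\circ\phi)=u$ becomes $\inv(\arr(f^\dag))=\arr f$, which is~\eqref{eq:daggerarrow3} applied at $f^\dag$ (using $f^{\dag\dag}=f$). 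Finally, because the involution acts on morphisms by $\overline{\tau}_{X,Y}=\tau_{Y,X}$ and $\overline{\overline{A}}=A$, the defining equation $i\circ\overline{i}=\id$ of an involutive monoid unwinds to $\inv(\inv a)=a$, i.e.~\eqref{eq:daggerarrow1}.

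Each of these equivalences runs in both directions, so reading $\inv$ off the structure map $i$ of an involutive monoid conversely yields an operation satisfying~\eqref{eq:daggerarrow1}--\eqref{eq:daggerarrow3} on the weak arrow coming from the underlying monoid. As $\inv\leftrightarrow i$ is a bare identification of families of functions, combining this with the cited bijection between weak arrows and monoids establishes the claimed exact correspondence.

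I expect the only real work to be bookkeeping: making the involutive monoidal structure of $\prof$ from Proposition~\ref{prop:swapdaggeronprof} sufficiently explicit — tracking variances through the coend formula for $\otimes$, spelling out $\chi_{A,A}$ on representatives, and confirming that the canonically induced $\phi\colon I\to\overline{I}$ (Egger's lemma) is the dagger on $\hom_{\cat{C}}$ — so that it can be matched against $\acmp$, $\arr$ and $(\ )^\dag$. Once those concrete descriptions are in place, every step above is a one-line rewrite.
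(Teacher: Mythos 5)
Your proposal is correct and follows essentially the same route as the paper: starting from the known identification of weak arrows with monoids in $\prof$, it matches the extra operation $\inv$ with an involutive monoid structure $i\colon\overline{A}\to A$, with the homomorphism conditions unwinding to~\eqref{eq:daggerarrow2} and~\eqref{eq:daggerarrow3} and the equation $i\circ\overline{i}=\id$ unwinding to~\eqref{eq:daggerarrow1}. Your explicit check that naturality of $i$ is itself a consequence of the dagger arrow laws is a detail the paper leaves implicit, and is worth keeping.
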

\begin{proof} 
  It suffices to show that the dagger on an arrow corresponds to an involution on the corresponding monoid $F$. But this is easy: an involution on $F$ corresponds to giving, for each $X,Y$ a map $F(X,Y)\to F(Y,X)$ subject to some axioms. That this involution is a monoid homomorphism amounts to it being a contravariant identity-on-objects-functor, and the other axiom amounts to it being involutive.
\end{proof}

\begin{remark}\label{rem:strength}
  If the operation $\first$ is modeled categorically as (internal) strength, axiom~\eqref{eq:daggerarrow4} for dagger arrows can be phrased in $\prof$ as follows: for each object $Z$ of $\cat{C}$, and each dagger arrow $M$, the profunctor $M_Z=M((-)\otimes Z,(+)\otimes Z)$ is also a dagger arrow, and $\first_{-,+,Z}$ is a natural transformation $M\Rightarrow M_Z$. The arrow laws~\eqref{eq:arrow7} and~\eqref{eq:arrow8} imply that it is a monoid homomorphism, and the new axiom just states that it is in fact a homomorphism of involutive monoids. For inverse arrows this law is not needed, as any functor between inverse categories is automatically a dagger functor and thus every monoid homomorphism between monoids corresponding to inverse arrows preserves the involution.
\end{remark}

Next we set out to characterize which involutive monoids correspond to inverse arrows. Given an involutive monoid $M$, the obvious approach would be to just state that  the map $M\to M$ defined by $a\mapsto a \circ a^\dag \circ a$ is the identity. However, there is a catch: for an arbitrary involutive monoid, the map $a\mapsto a \circ a^\dag \circ a$ is not natural transformation and therefore not a morphism in $\prof$. To circumvent this, we first require some conditions guaranteeing naturality. These conditions concern endomorphisms, and to discuss them we introduce an auxiliary operation on $\prof$. \cut{It categorifies the operation of a restriction category of Definition~\ref{def:restrictioncategory} to profunctors on dagger categories. }

\begin{definition} 
  Let $\cat{C}$ be a dagger category. Given a profunctor $M \colon \cat{C}\op \times \cat{C} \to\cat{Set}$, define $LM \colon \cat{C}\op \times \cat{C} \to \cat{Set}$ by 
  \begin{align*}
    LM(X,Y)&=M(X,X)\text, \\
    LM(f,g)&=f^\dag \circ (-) \circ f\text.
  \intertext{If $M$ is an involutive monoid in $\prof$, define a subprofunctor of $LM$:}
    L^+M(X,Y) &= \{a^\dag \circ a \in M(X,X) \mid a \in M(X,Z)\text{ for some }Z\}\text.
  \end{align*} 
\end{definition}

\begin{remark}
  The construction $L$ is a functor $\prof \to \prof$. 
  There is an analogous construction $RM(X,Y)=M(Y,Y)$ and $R^+M$, and furthermore $RM=\overline{LM}$.
  For any monoid $M$ in $\prof$, $LM$ is a right $M$-module (and $RM$ a left $M$-module).
  Compare Example~\ref{ex:cpm}.
\end{remark}

For the rest of this section, assume the base category \cat{C} to be an inverse category. This lets us multiply positive arrows by positive pure morphisms. If $M$ is an involutive monoid in $\prof$, then the map $LM\times L^+(\hom_\cat{C})\to LM$ defined by $(a,g^\dag\circ g)\mapsto a\circ g^\dag \circ g$ is natural:
\begin{align*}
    &LM\times L^+(\hom )(f,\id[Y])(a,g^\dag \circ g) \\
    &=(f^\dag \circ a \circ f,f^\dag \circ g^\dag \circ g\circ f) \\
    &\mapsto f^\dag\circ a\circ f\circ f^\dag \circ g^\dag \circ g\circ f \\
    &=f^\dag\circ a\circ g^\dag\circ g\circ f\circ f^\dag\circ f&\text{ because \cat{C} is an inverse category}\\
    &=f^\dag \circ a\circ g^\dag\circ g\circ f&\text{ because \cat{C} is an inverse category} \\
    &=LM(f,\id[Y])(a\circ g^\dag\circ g)
\end{align*}

Similarly there is a map $L^+(\hom)\times LM\to LM$ defined by $(g^\dag \circ g,a)\mapsto g^\dag\circ g\circ a$. Now the category corresponding to $M$ satisfies $a^\dag \circ a \circ g^\dag\circ  g=g^\dag \circ g\circ  a^\dag\circ  a$ for all $a$ and pure $g$ if and only if the following diagram commutes:
\begin{equation}\label{diag:step0}
    \begin{aligned}\begin{tikzpicture}
        \matrix (m) [matrix of math nodes,row sep=1.5em,column sep=4em,minimum width=2em]
        {L^+M\times L^+(\hom)  && LM\times L^+(\hom)  \\
          L^+(\hom)\times L^+M&   L^+(\hom)\times LM& LM\\};
        \path[->]
        (m-1-1) edge node [left] {$\sigma $} (m-2-1) 
            edge node [above] {$$} (m-1-3) 
        (m-1-3) edge node [right] {$$} (m-2-3) 
        (m-2-1) edge node [below] {$$} (m-2-2) 
        (m-2-2) edge node [below] {$$} (m-2-3); 
    \end{tikzpicture}\end{aligned}
\end{equation}
If this is satisfied for an involutive monoid $M$ in $\prof$, then positive arrows multiply. In other words, the map $L^+M\times L^+M\to LM$ defined by $(a^\dag \circ a,b^\dag \circ b)\mapsto a^\dag\circ a\circ b^\dag\circ b$ is natural:
\begin{align*}
  &D_M(f,g)(a,a^\dag,a)\\
  &=(g \circ a\circ f,f^\dag \circ a^\dag \circ g^\dag,g\circ a\circ f) \\
  &\mapsto g\circ a\circ f\circ f^\dag\circ  a^\dag\circ  g^\dag \circ g\circ a\circ f \\
  &=g\circ a\circ a^\dag\circ  g^\dag\circ  g\circ a\circ  f\circ f^\dag\circ  f &\text{ by~\eqref{diag:step0}}\\
  &=g\circ a\circ a^\dag\circ  g^\dag\circ  g\circ a\circ f&\text{ because \cat{C} is an inverse category} \\
  &=g\circ g^\dag \circ g\circ  a\circ a^\dag \circ a\circ  f &\text{ by~\eqref{diag:step0}} \\
  &=g\circ a\circ a^\dag\circ  a\circ  f&\text{ because \cat{C} is an inverse category} \\
  &=M(f,g)(a\circ a^\dag\circ  a)
  \end{align*}
This multiplication is commutative iff the following diagram commutes:
\begin{equation}\label{diag:step1 (positives commute)}
    \begin{aligned}\begin{tikzpicture}
        \matrix (m) [matrix of math nodes,row sep=2em,column sep=4em,minimum width=2em]
        {L^+M\times L^+M  & L^+M\times L^+M  \\
          & LM  \\};
        \path[->]
        (m-1-1) edge node [left] {$$} (m-2-2)
            edge node [above] {$\sigma$} (m-1-2)
        (m-1-2) edge node [right] {$$} (m-2-2);
    \end{tikzpicture}\end{aligned}
\end{equation}

Finally, let $D_M\hookrightarrow M\times\overline{M}\times M$ be the diagonal
$
  D_M(X,Y)=\{(a,a^\dag,a)\mid a\in M(X,Y)\}
$.

If $M$ satisfies~\eqref{diag:step0}, then the map $D_M\to M$ defined by $(a,a^\dag, a)\mapsto a\circ a^\dag\circ a$ is natural:

\begin{align*}
  &D_M(f,g)(a,a^\dag,a)\\
  &=(g \circ a\circ f,f^\dag \circ a^\dag \circ g^\dag,g\circ a\circ f) \\
  &\mapsto g\circ a\circ f\circ f^\dag\circ  a^\dag\circ  g^\dag \circ g\circ a\circ f \\
  &=g\circ a\circ a^\dag\circ  g^\dag\circ  g\circ a\circ  f\circ f^\dag\circ  f &\text{ by~\eqref{diag:step0}}\\
  &=g\circ a\circ a^\dag\circ  g^\dag\circ  g\circ a\circ f&\text{ because \cat{C} is an inverse category} \\
  &=g\circ g^\dag \circ g\circ  a\circ a^\dag \circ a\circ  f &\text{ by~\eqref{diag:step0}} \\
  &=g\circ a\circ a^\dag\circ  a\circ  f&\text{ because \cat{C} is an inverse category} \\
  &=M(f,g)(a\circ a^\dag\circ  a)
  \end{align*}

 Thus $M$ satisfies $a\circ a^\dag\circ a=a$ if and only if the following diagram commutes:
\begin{equation}\label{diag:step2 (aa^daga=a)}
    \begin{aligned}\begin{tikzpicture}
        \matrix (m) [matrix of math nodes,row sep=2em,column sep=4em,minimum width=2em]
        {M  & D_M \\
          & M \\};
        \path[->]
        (m-1-1) edge node [below] {$\id$} (m-2-2)
            edge node [above] {$$} (m-1-2)
        (m-1-2) edge node [right] {$$} (m-2-2);
      \end{tikzpicture}\end{aligned}
\end{equation}
Hence we have established the following theorem. 

\begin{theorem}\label{thm:characterizing_inverse_arrows}
  Let \cat{C} be an inverse category. Then the multiplicative fragments of inverse arrows on \cat{C} correspond exactly to involutive monoids in $\prof$ making the diagrams~\eqref{diag:step0}--\eqref{diag:step2 (aa^daga=a)} commute.
  \qed
\end{theorem}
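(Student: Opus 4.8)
The plan is to bootstrap from the preceding theorem, which identifies multiplicative fragments of dagger arrows on \cat{C} with involutive monoids in \prof, and then to cut both sides down to the inverse case. Since \cat{C} is assumed to be an inverse category, a dagger arrow on it is an inverse arrow exactly when it additionally satisfies~\eqref{eq:inversearrow1} and~\eqref{eq:inversearrow2}; equivalently, when the category $\cat{D}$ it induces (with $\cat{D}(X,Y)=M(X,Y)$, composition the monoid multiplication, dagger the involution) is itself an inverse category — no separate condition on $J\colon\cat{C}\to\cat{D}$ is needed, since a functor between inverse categories automatically preserves daggers. So the task reduces to phrasing the two equations~\eqref{eq:inversearrow1} and~\eqref{eq:inversearrow2} as diagrams in \prof, which is precisely what the discussion preceding the theorem accomplishes; the proof is mostly a matter of assembling those equivalences.

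First I would record the subtlety that motivates~\eqref{diag:step0}: for an arbitrary involutive monoid the set-maps $a\mapsto a\circ a^\dag\circ a$ and $(a^\dag\circ a,b^\dag\circ b)\mapsto a^\dag\circ a\circ b^\dag\circ b$ need not be natural, hence need not be morphisms in \prof, so one cannot simply demand ``this map is the identity''. Using that \cat{C} is an inverse category, the calculations before the theorem show that multiplication of a positive arrow by a positive pure morphism gives natural maps $LM\times L^+(\hom_\cat{C})\to LM$ and $L^+(\hom_\cat{C})\times LM\to LM$; hence~\eqref{diag:step0} typechecks for every involutive monoid $M$ in \prof, and $M$ makes it commute iff every positive arrow $a^\dag\circ a$ commutes with every positive pure morphism $g^\dag\circ g$. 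Granting~\eqref{diag:step0}, the further calculations show that $(a^\dag\circ a,b^\dag\circ b)\mapsto a^\dag\circ a\circ b^\dag\circ b$ on $L^+M\times L^+M\to LM$ and $(a,a^\dag,a)\mapsto a\circ a^\dag\circ a$ on $D_M\to M$ are natural; so~\eqref{diag:step1 (positives commute)} and~\eqref{diag:step2 (aa^daga=a)} typecheck, and $M$ makes them commute iff, respectively, arbitrary positive arrows commute (\ie~\eqref{eq:inversearrow2}) and $a\circ a^\dag\circ a=a$ for all $a$ (\ie~\eqref{eq:inversearrow1}).

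It remains to close the correspondence both ways. Given an inverse arrow, the underlying dagger arrow yields an involutive monoid $M$ by the previous theorem;~\eqref{eq:inversearrow1} and~\eqref{eq:inversearrow2} give~\eqref{diag:step2 (aa^daga=a)} and~\eqref{diag:step1 (positives commute)} by the equivalences above, and~\eqref{diag:step0} holds too because, for pure $g$, the arrow $\arr(g^\dag\circ g)=\arr g\acmp\inv(\arr g)$ is a positive arrow (here one uses~\eqref{eq:arrow2} and~\eqref{eq:daggerarrow3}), so it commutes with every positive arrow $a^\dag\circ a$ by~\eqref{eq:inversearrow2}. Conversely, an involutive monoid making~\eqref{diag:step0}--\eqref{diag:step2 (aa^daga=a)} commute is in particular an involutive monoid, hence a dagger arrow;~\eqref{diag:step1 (positives commute)} and~\eqref{diag:step2 (aa^daga=a)} (meaningful thanks to~\eqref{diag:step0}) yield~\eqref{eq:inversearrow2} and~\eqref{eq:inversearrow1}, so the induced $\cat{D}$ is a dagger category of partial isometries in which positives commute — an inverse category — making the dagger arrow an inverse arrow. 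These two assignments are mutually inverse because they already are at the level of dagger arrows and involutive monoids, the operation $\inv$ being carried by the involution.

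The main obstacle is not the bookkeeping of this assembly but the naturality issue underlying it: the naïve translation of~\eqref{eq:inversearrow1} and~\eqref{eq:inversearrow2} does not live in \prof. The real work — already done in the text — is to isolate~\eqref{diag:step0} as the minimal extra hypothesis restoring naturality of all the operations involved (which forces the inverse-category identities of \cat{C} into each computation), and then to verify that~\eqref{diag:step0} is itself a consequence of the inverse-arrow axioms, a step that quietly relies on $\arr$ preserving daggers, \ie~\eqref{eq:daggerarrow3}.
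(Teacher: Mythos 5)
Your proposal is correct and follows essentially the same route as the paper: the theorem's proof is precisely the preceding discussion, which establishes the naturality of the relevant maps (using that \cat{C} is an inverse category), identifies~\eqref{diag:step0} as the hypothesis making~\eqref{diag:step1 (positives commute)} and~\eqref{diag:step2 (aa^daga=a)} well-typed, and matches those two diagrams to~\eqref{eq:inversearrow2} and~\eqref{eq:inversearrow1}. Your explicit observation that~\eqref{diag:step0} is itself forced by~\eqref{eq:inversearrow2} via $\arr(g^\dag\circ g)=\arr g\acmp\inv(\arr g)$ is a useful detail the paper leaves implicit, but it does not change the argument.
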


\section{Applications and related work}
As we have seen, inverse arrows capture a variety of fundamental reversible
effects. An immediate application of our results would be to retrofit existing
typed reversible functional programming languages (\eg,
Theseus~\cite{jamessabry:theseus}) with inverse arrows to accommodate 
reversible effects while maintaining a type-level separation between pure and
effectful programs. Another approach could be to design entirely new such
programming languages, taking inverse arrows as the fundamental representation
of reversible effects.
While the Haskell approach to arrows uses typeclasses~\cite{hughes:arrows},
these are not a priori necessary to reap the benefits of inverse arrows. For
example, special syntax for defining inverse arrows could also be used, either
explicitly, or implicitly by means of an effect system that uses inverse arrows
``under the hood''.

\cut{To aid programming with ordinary arrows, a handy notation due to
Paterson~\cite{paterson:notation,paterson:computation} may be used. If the underlying monoidal dagger category has natural coassociative diagonals, for example when it has inverse products, a similar $\mathtt{do}$-notation can be implemented for inverse and dagger arrows.}

To aid programming with ordinary arrows, a handy notation due to
Paterson~\cite{paterson:notation,paterson:computation} may be used. The
simplest form of this notation is based on process combinators, the central one
being
\begin{equation*}
  p \to e_1 \prec e_2 = \left\{\begin{array}{ll}
    \mathit{arr}(\lambda p. e_2) \acmp e_1 & \quad\text{if $p$ is fresh for $e_1$,} \\
    \mathit{arr}(\lambda p. (e_1,e_2)) \acmp \text{app} & \quad\text{otherwise.}
  \end{array} \right.
\end{equation*}
Note that if the second branch is used, the arrow must
additionally be an instance of $\mathit{ArrowApply}$ (so that it is, in fact,
a monad). Though we only know of degenerate examples where inverse arrows are
instances of $\mathit{ArrowApply}$, this definition is conceptually
unproblematic (from the point of view of guaranteeing reversibility) so long as the pure function $\lambda p. e_2$ is first-order and
reversible. A more advanced style of this notation is the
\emph{do}-notation for arrows, which additionally relies on the arrow
combinator
\begin{haskell*}
  bind &:& A~X~Y\to{A~(X\otimes{Y})~Z}\to{A~X~Z} \\
  f~'bind'~g &=& (arr(id)\afanout{f})\acmp{g}\enspace.
\end{haskell*}
If the underlying monoidal dagger category has natural coassociative diagonals,
for example when it has inverse products, this combinator does exist: the arrow
combinator $(\afanout)$ can be defined as
\begin{haskell*}
  (\afanout) &:& A~X~Y\to{A~X~Z}\to{A~X~(Y\otimes{Z})} \\
  f\afanout{g} &=& arr(copy) \acmp first(f) \acmp second(g)
\end{haskell*}
where \<copy : X\to{X}\otimes{X}\> is the natural diagonal (given in pseudocode
by \<copy x = (x,x)\>), and the combinator \<second\> is derived from \<first\> in the
usual way, \ie, as
\begin{haskell*}
  second &:& A~X~Y\to{A}~(Z\otimes{X})~(Z\otimes{Y}) \\
  second f &=& arr(swap)\acmp{first(f)}\acmp{arr(swap)}
\end{haskell*}
with \<swap : X\otimes{Y}\leftrightarrow{Y}\otimes{X}\> given by \<swap (x,y) = (y,x)\>. This allows \emph{do}-notation of the form
\begin{haskell*}
  \hskwd{do} \{p\leftarrow{c}\scolon{A}\} \equiv c 'bind' 
  (\kappa{p}. \hskwd{do} \{A\})\text,
\end{haskell*}
so soon as the $\kappa$-calculus~\cite{hasegawa:kappa_calc}
expression \<\kappa{p}. \hskwd{do} \{A\}\> is reversible. Note, however, that
\emph{do}-expressions of the form \<\hskwd{do} \{c\scolon A\}\> (\ie, where 
the output of $c$ is discarded entirely) will fail to be reversible in all but
the most trivial cases. 
Since \<\hskwd{do} \{p \leftarrow c\scolon A\}\> produces a value of an inverse arrow type, closure under program inversion provides a program we might call
\begin{haskell*}
  \hskwd{undo} \{p\leftarrow{c}\scolon{A}\} \equiv inv(\hskwd{do} 
  \{p\leftarrow{c}\scolon{A}\}) \enspace\text.
\end{haskell*}
Inverse arrow law~\eqref{eq:inversearrow1} then guarantees that \emph{do}ing, then \emph{undo}ing, and then \emph{do}ing the same operation is the same as \emph{do}ing it once.

A pleasant consequence of the semantics of inverse arrows is that
inverse arrows are safe: as long as the inverse arrow laws are satisfied, fundamental properties guaranteed by reversible functional
programming languages (such as invertibility and closure under program
inversion) are preserved. In this way, inverse arrows provide reversible
effects as a conservative extension to pure reversible functional programming.

A similar approach to invertibility using arrows is given by bidirectional
arrows~\cite{alimarineetal:biarrows}. However, while the goal of inverse arrows
is to add effects to already invertible languages, bidirectional arrows arise
as a means to add invertibility to an otherwise uninvertible language. As such,
bidirectional arrows have different concerns than inverse arrows, and notably
do not guarantee invertibility in the general case.

\subsubsection*{Acknowledgements}
This work was supported by COST Action IC1405, the Oskar Huttunen Foundation, and EPSRC Fellowship EP/L002388/1. We thank Robert Furber and Robert Gl{\"u}ck for discussions.


\bibliographystyle{plainurl}
\bibliography{reversiblearr}

\begin{thebibliography}{10}

\bibitem{abramsky:structrc}
S.~Abramsky.
\newblock A structural approach to reversible computation.
\newblock {\em Theoretical Computer Science}, 347(3):441--464, 2005.

\bibitem{alimarineetal:biarrows}
A.~Alimarine, S.~Smetsers, A.~{van Weelden}, M.~{van Eekelen}, and
  R.~Plasmeijer.
\newblock There and back again: {A}rrows for invertible programming.
\newblock In {\em Proceedings of the 2005 {ACM SIGPLAN} workshop on {H}askell},
  pages 86--97, 2005.

\bibitem{asada2010arrows}
K.~Asada.
\newblock Arrows are strong monads.
\newblock In {\em Proceedings of the third ACM SIGPLAN workshop on
  Mathematically Structured Functional Programming}, pages 33--42. ACM, 2010.

\bibitem{bennett:reversibility}
C.~H. Bennett.
\newblock Logical reversibility of computation.
\newblock {\em IBM Journal of Research and Development}, 17(6):525--532, 1973.

\bibitem{borceux}
F.~Borceux.
\newblock {\em Handbook of categorical algebra}.
\newblock Cambridge University Press, 1994.

\bibitem{cockettlack:restrictioncategories}
J.~R.~B. Cockett and S.~Lack.
\newblock Restriction categories {I}: categories of partial maps.
\newblock {\em Theoretical Computer Science}, 270:223--259, 2002.

\bibitem{coeckeheunen:completepositivity}
B.~Coecke and C.~Heunen.
\newblock Pictures of complete positivity in arbitrary dimension.
\newblock {\em Information and Computation}, 250:50--58, 2016.

\bibitem{cristescuetal:picalculus}
I.~Cristescu, J.~Krivine, and D.~Varacca.
\newblock A compositional semantics for the reversible $\pi$-calculus.
\newblock In {\em Logic in Computer Science}, pages 388--397. IEEE Computer
  Society, 2013.

\bibitem{egger2011involutive}
J.~M. Egger.
\newblock On involutive monoidal categories.
\newblock {\em Theory and Applications of Categories}, 25(14):368--393, 2011.

\bibitem{gabbaykropholler:lazy}
M.~J. Gabbay and P.~H. Kropholler.
\newblock Imaginary groups: lazy monoids and reversible computation.
\newblock {\em Mathematical Structures in Computer Science}, 23(5):1002--10031,
  2013.

\bibitem{giles:thesis}
B.~G. Giles.
\newblock {\em An investigation of some theoretical aspects of reversible
  computing}.
\newblock PhD thesis, University of Calgary, 2014.

\bibitem{glueckkaarsgaard:rfcl}
R.~Gl\"{u}ck and R.~Kaarsgaard.
\newblock A categorical foundations for structured reversible flowchart
  languages.
\newblock In {\em Mathematical Foundations of Program Semantics XXXIII,
  Proceedings}, 2017.
\newblock to appear.

\bibitem{hasegawa:kappa_calc}
M.~Hasegawa.
\newblock Decomposing typed lambda calculus into a couple of categorical
  programming languages.
\newblock In D.~Pitt, D.~E. Rydeheard, and P.~Johnstone, editors, {\em Category
  Theory and Computer Science}, volume 953 of {\em Lecture Notes in Computer
  Science}, pages 200--219. Springer, 1995.

\bibitem{heunenkarvonen:reversiblemonads}
C.~Heunen and M.~Karvonen.
\newblock Reversible monadic computing.
\newblock In {\em Mathematical Foundations of Programming Semantics (MFPS)},
  volume 319 of {\em Electronic Notes in Theoretical Computer Science}, pages
  217--237, 2015.

\bibitem{heunenkarvonen:daggermonads}
C.~Heunen and M.~Karvonen.
\newblock Monads on dagger categories.
\newblock {\em Theory and Applications of Categories}, 31(35):1016--1043, 2016.

\bibitem{hughes:arrows}
J.~Hughes.
\newblock Generalising monads to arrows.
\newblock {\em Science of Computer Programming}, 37:67--111, 2000.

\bibitem{hughes:programmingarrows}
J.~Hughes.
\newblock {\em Advanced Functional Programming}, volume 3622 of {\em Lecture
  Notes in Computer Science}, chapter Programming with Arrows, pages 73--129.
\newblock Springer, 2005.

\bibitem{jacobs2009categorical}
B.~Jacobs, C.~Heunen, and I.~Hasuo.
\newblock Categorical semantics for arrows.
\newblock {\em Journal of Functional Programming}, 19(3-4):403--438, 2009.

\bibitem{jamessabry:infeff}
R.~P. James and A.~Sabry.
\newblock Information effects.
\newblock In {\em Principles of Programming Languages}, pages 73--84. ACM,
  2012.

\bibitem{jamessabry:theseus}
R.~P. James and A.~Sabry.
\newblock Theseus: {A} high level language for reversible computing, 2014.
\newblock Work-in-progress report at RC 2014, available at
  \url{https://www.cs.indiana.edu/~sabry/papers/theseus.pdf}.

\bibitem{kaarsgaardaxelsengluck:joininversecategories}
R.~Kaarsgaard, H.~B. Axelsen, and R.~Gl{\"u}ck.
\newblock Join inverse categories and reversible recursion.
\newblock {\em Journal of Logical and Algebraic Methods in Programming}, 87,
  2017.

\bibitem{kaarsgaardthomsen:rfun}
R.~Kaarsgaard and M.~K. Thomsen.
\newblock Rfun revisited.
\newblock In M.~Wald\'{e}n, editor, {\em Proceedings of the 29th Nordic
  Workshop on Programming Theory}, volume~27 of {\em TUCS Lecture Notes}, pages
  65--67. Turku Centre for Computer Science, 2017.

\bibitem{kawabeglueck:lrinv}
M.~Kawabe and R.~Gl\"{u}ck.
\newblock The program inverter lrinv and its structure.
\newblock In M.~V. Hermenegildo and D.~Cabeza, editors, {\em Practical Aspects
  of Declarative Languages}, volume 3350 of {\em Lecture Notes in Computer
  Science}, pages 219--234. Springer, 2005.

\bibitem{kelly1974coherence}
G.~M. Kelly.
\newblock Coherence theorems for lax algebras and for distributive laws.
\newblock In {\em Lecture Notes in Mathematics 420}, pages 281--375. Springer,
  1974.

\bibitem{kutribwendlandt:automata}
M.~Kutrib and M.~Wendlandt.
\newblock Reversible limited automata.
\newblock In {\em Machines, Computations and Universality}, volume 9288 of {\em
  Lecture Notes in Computer Science}, pages 113--128, 2015.

\bibitem{laplaza1972coherence}
M.~L. Laplaza.
\newblock Coherence for distributivity.
\newblock In {\em Lecture Notes in Mathematics 281}, pages 29--72. Springer,
  1972.

\bibitem{moggi:monads}
E.~Moggi.
\newblock Computational lambda-calculus and monads.
\newblock {\em Logic in Computer Science}, 1989.

\bibitem{morita:twoway}
K.~Morita.
\newblock Two-way reversible multihead automata.
\newblock {\em Fundamenta Informaticae}, 110(1--4):241--254, 2011.

\bibitem{paterson:notation}
R.~Paterson.
\newblock A new notation for arrows.
\newblock In {\em International Conference on Functional Programming}, pages
  229--240, 2001.

\bibitem{paterson:computation}
R.~Paterson.
\newblock Arrows and computation.
\newblock In J.~Gibbons and O.~{de Moor}, editors, {\em The Fun of
  Programming}, pages 201--222. Palgrave, 2003.

\bibitem{schordanetal:parallel}
M.~Schordan, D.~Jefferson, P.~Barnes, T.~Oppelstrup, and D.~Quinlan.
\newblock Reverse code generation for parallel discrete event simulation.
\newblock In {\em Reversible Computing}, volume 9138 of {\em Lecture Notes in
  Computer Science}, pages 95--110. Springer, 2015.

\bibitem{schultzbordignonstoy:reconfiguration}
U.~P. Schultz, M.~Bordignon, and K.~St{\o}y.
\newblock Robust and reversible execution of self-reconfiguration sequences.
\newblock {\em Robotica}, 29(1):35--37, 2011.

\bibitem{selinger:dagger}
P.~Selinger.
\newblock Dagger compact closed categories and completely positive maps.
\newblock In {\em Quantum Programming Languages}, volume 170 of {\em Electronic
  Notes in Theoretical Computer Science}, pages 139--163. Elsevier, 2007.

\bibitem{stoddartlynas:probabilistic}
B.~Stoddart and R.~Lynas.
\newblock A virtual machine for supporting reversible probabilistic guarded
  command languages.
\newblock In {\em Reversible Computing}, volume 253 of {\em Electronic Notes in
  Theoretical Computer Science}, pages 33--56. Elsevier, 2010.

\bibitem{toffoli:reversible}
T.~Toffoli.
\newblock Reversible computing.
\newblock In {\em International Colloquium on Automata, Languages and
  Programming}, pages 632--644, 1980.

\bibitem{yokoyamaetal:rfun}
T.~Yokoyama, H.~B. Axelsen, and R.~Gl\"{u}ck.
\newblock Towards a reversible functional language.
\newblock In A.~{D}e {V}os and R.~Wille, editors, {\em Reversible Computation
  2011}, volume 7165 of {\em LNCS}, pages 14--29. Springer, 2012.

\bibitem{yokoyamaglueck:janus}
T.~Yokoyama and R.~Gl\"{u}ck.
\newblock A reversible programming language and its invertible
  self-interpreter.
\newblock In {\em Partial Evaluation and Semantics-Based Program Manipulation.
  Proceedings}, pages 144--153. {ACM Press}, 2007.

\end{thebibliography}

\end{document}